\documentclass[sigconf,nonacm]{acmart}
\usepackage{graphicx} 
\usepackage{tabularx}
\usepackage{ragged2e}

\setcitestyle{numbers,sort&compress}

\usepackage{amsmath}



\title[Proof of Commitment]{Proof of Commitment: A Human-Centric Resource for Permissionless Consensus}

\author{Homayoun Maleki}
\affiliation{%
	\institution{University of Deusto}
	\department{DeustoTech}
	\city{Bilbao}
	\country{Spain}
}
\email{h.maleki@deusto.es}
\author{Nekane Sainz}
\affiliation{%
	\institution{University of Deusto}
	\department{DeustoTech}
	\city{Bilbao}
	\country{Spain}
}
\email{nekane.sainz@deusto.es}
\author{Jon Legarda}
\affiliation{%
	\institution{University of Deusto}
	\department{DeustoTech}
	\city{Bilbao}
	\country{Spain}
}
\email{jlegarda@deusto.es}



\begin{document}

\begin{abstract}
	Permissionless consensus protocols require a scarce resource to regulate
	leader election and to provide Sybil resistance. Existing paradigms---Proof of
	Work (PoW) and Proof of Stake (PoS)---instantiate this scarcity through
	\emph{parallelizable} resources such as computational throughput or financial
	capital. Once acquired, these resources can be subdivided across arbitrarily
	many identities at negligible marginal cost, implying a fundamental
	impossibility of enforcing \emph{linear} Sybil cost in PoW/PoS-style systems.
	
	We introduce \emph{Proof of Commitment (PoCmt)}, a consensus primitive grounded
	in a fundamentally \emph{non-parallelizable} resource: real-time human
	engagement. Each validator maintains a decomposable commitment state capturing
	cumulative human effort, protocol participation, and online availability.
	Engagement is enforced via a Human Challenge Oracle (HCO) that issues
	time-sensitive, human-verifiable tasks whose solutions are identity-bound and
	must be produced within a bounded time window. As a result, the number of
	challenges that can be solved per epoch is intrinsically limited by available
	human-time, independent of identity count or capital.
	
	Under this model, sustaining $s$ active identities requires $\Theta(s)$ units of
	human-time per \emph{human window}. We establish a sharp cost-theoretic separation: any
	protocol whose weighting resource is parallelizable admits asymptotically zero
	marginal Sybil cost, whereas PoCmt enforces a strictly linear Sybil-cost profile.
	Building on a weighted-backbone analysis, we show that PoCmt achieves safety,
	liveness, and commitment-proportional fairness under partial synchrony, and that
	adversarial commitment cannot outgrow honest commitment without sustained
	linear human effort.
	
	We complement the analysis with simulations that isolate human-time capacity as
	the sole adversarial bottleneck. The results empirically validate the commitment
	drift invariant, demonstrate that increasing identity count or capital alone
	does not amplify influence, and show that availability decay suppresses dormant
	or rotating identities. Together, these results position PoCmt as a new point in
	the consensus design space, grounding permissionless security in human-time
	rather than computation or capital.
\end{abstract}

	\maketitle
	

\section{Introduction}

Permissionless consensus in open networks requires a scarce and verifiable
resource to regulate leader election and to provide Sybil resistance.
Classical paradigms instantiate this resource as \emph{computational power}
(Proof of Work, PoW)~\cite{nakamoto2008bitcoin} or \emph{monetary capital}
(Proof of Stake, PoS)~\cite{garay2015bitcoin}. While successful in practice,
both paradigms rely on resources that are inherently \emph{parallelizable}:
once acquired, hashpower or stake can be subdivided across arbitrarily many
identities at negligible marginal cost. Consequently, PoW/PoS systems cannot
enforce \emph{linear} Sybil cost: sustaining many identities is asymptotically
no more expensive than sustaining one, and identity replication becomes a
structural avenue for adversarial amplification.

\smallskip
\noindent\textbf{Beyond parallelizable resources.}
This paper explores a fundamentally different class of scarcity—one that is
\emph{intrinsically non-parallelizable}, resistant to automation, and tied to
individual human trajectories. Our key observation is that \emph{human time}
and \emph{cognitive effort} possess precisely these properties: each unit of
human-verifiable work must be performed by a single human, cannot be
fractionally delegated, and cannot be parallelized beyond the number of humans
actively engaged within a given operational window.

\smallskip
\noindent\textbf{Proof of Commitment (PoCmt).}
We introduce \emph{Proof of Commitment (PoCmt)}, a permissionless consensus
primitive in which validator influence is derived from \emph{real-time,
	human-verifiable engagement} rather than machines or capital. Each validator
maintains a three-dimensional commitment state
\[
S_v(t)=\big(H_v(t),\,P_v(t),\,U_v(t)\big),
\]
capturing cumulative human engagement, protocol participation, and online
availability. These components evolve through deterministic boost, decay, and
slashing rules, yielding a time-dependent commitment score that directly
determines leader-election weight.

Human effort is formalized via a \emph{Human Challenge Oracle} (HCO), which
issues identity-bound, time-sensitive challenges that must be solved within a
bounded epoch. Crucially, at most a bounded number of challenges can be solved
per epoch, independent of the number of identities controlled by an adversary.
Solving one validator’s challenge provides no advantage for another, and
automated solvers succeed only with negligible probability. As a result,
maintaining $s$ actively engaged identities necessarily requires
$\Theta(s)$ units of human-time per human windows.

\smallskip
\noindent\textbf{A new Sybil-resistance regime.}
PoCmt induces a cost-theoretic separation between consensus protocols based on
parallelizable resources and those grounded in human-time. In particular, we
show that:
\begin{itemize}
	\item any protocol whose weighting resource is parallelizable (e.g.,
	hashpower or stake) admits \emph{zero marginal Sybil cost}, as identities can
	be replicated without additional resource expenditure;
	
	\item PoCmt enforces \emph{linear Sybil cost}: sustaining $s$ identities for
	$T$ epochs requires $\Theta(sT)$ units of human effort, and adversarial
	commitment cannot outpace honest commitment without continuously paying
	this cost;
	
	\item commitment-weighted leader election in PoCmt satisfies backbone-style
	safety, liveness, and proportional fairness under standard partial synchrony
	assumptions.
\end{itemize}

\smallskip
\noindent\textbf{Why human-time matters.}
Human involvement has traditionally appeared only as a one-time identity gate,
such as CAPTCHAs or Proof-of-Personhood ceremonies. PoCmt elevates human-time to
a \emph{persistent, quantifiable, consensus-relevant resource}. This shift
fundamentally alters the Sybil-resistance landscape: adversarial influence is no
longer limited by hardware, capital, or identity creation, but by sustained
human effort—a resource that is scarce, economic, and inherently
non-parallelizable.

\paragraph*{Contributions.}
\begin{itemize}
	\item We introduce PoCmt, the first consensus primitive grounded in
	non-parallelizable human-time, formalized through a decomposable commitment
	state.
	
	\item We define a Human Challenge Oracle (HCO) capturing identity-bound,
	AI-resistant engagement with explicit real-time constraints.
	
	\item We prove a cost-theoretic separation showing that
	parallelizable-resource protocols admit zero marginal Sybil cost, whereas
	PoCmt enforces $\Theta(sT)$ adversarial cost to sustain $s$ identities for
	$T$ epochs.
	
	\item Using a weighted-backbone analysis, we establish safety, liveness, and
	fairness under partial synchrony and characterize the human effort required
	for adversarial majority.
	
	\item We present simulations that isolate human-time capacity as the sole
	adversarial bottleneck, empirically validating commitment drift, fairness,
	and the suppression of dormant or rotating identities.
\end{itemize}

\smallskip
\noindent\textbf{Scope and incentives.}
Our analysis focuses on the security consequences of using human-time as the
weighting resource under the standard honest-but-participating assumption.
Modeling incentives for sustained participation—including fatigue, churn,
abandonment, and validator replacement—is orthogonal to the cost-theoretic
separation and is discussed explicitly as a deployment limitation in
Section~\ref{sec:discussion}.

PoCmt demonstrates that permissionless consensus need not be anchored solely to
machine-dominated resources. By grounding security in human-time, it identifies
a qualitatively new point in the consensus design space and expands the
theoretical foundations of Sybil-resistant blockchain protocols.

\section{Background and Related Work}
\label{sec:related}

\paragraph{Sybil resistance and resource-based consensus.}
The Sybil attack—creating many identities at negligible marginal cost—is a
foundational obstacle for open membership systems~\cite{douceur2002sybil}.
Permissionless consensus protocols address this challenge by tying influence to
an assumed-scarce resource. In Proof-of-Work (PoW), influence is proportional to
computational effort~\cite{nakamoto2008bitcoin}, while in Proof-of-Stake (PoS),
it is proportional to locked capital. A substantial body of work formalizes
safety and liveness of such protocols under backbone-style frameworks and related
models~\cite{garay2015bitcoin,bonneau2015bitcoin,pass2017snow}, and studies
strategic deviations including selfish mining~\cite{eyal2014selfish} and
high-rate or forking dynamics~\cite{sompolinsky2015secure}. Representative PoS
protocols with rigorous analyses include Ouroboros~\cite{kiayias2017ouroboros},
while related work considers participation churn and ``sleepy'' validators under
partial synchrony~\cite{pass2017snow}.

A fundamental limitation shared by PoW and PoS is that their weighting
resources—hashpower and stake—are economically \emph{parallelizable}: once
acquired, they can be subdivided across arbitrarily many identities at
negligible marginal cost. As a consequence, linear Sybil cost is unattainable in
such systems, a phenomenon formalized in our cost-theoretic analysis
(Section~\ref{sec:cost}).

\smallskip
To contextualize PoCmt within the broader landscape, Table~\ref{tab:resource-taxonomy}
summarizes prior Sybil-resistance approaches according to the nature of the
underlying resource, its degree of parallelizability, and the resulting Sybil
cost profile. Importantly, the table distinguishes between
\emph{consensus-weighting resources} and \emph{identity or access mechanisms},
highlighting that prior work has not combined non-parallelizable human effort
with a persistent, time-evolving consensus weight.

\begin{table*}[t]
	\centering
	\small
	\setlength{\tabcolsep}{6pt}
	\renewcommand{\arraystretch}{1.15}
	\begin{tabularx}{\textwidth}{l c c >{\RaggedRight\arraybackslash}X}
		\hline
		\textbf{Resource Type} &
		\textbf{Parallelizable} &
		\textbf{Sybil Cost Profile} &
		\textbf{Representative Systems} \\
		\hline
		Computational power & Yes & Sublinear / zero marginal &
		PoW (Bitcoin)~\cite{nakamoto2008bitcoin} \\
		
		Monetary capital & Yes & Sublinear / zero marginal &
		PoS; Ouroboros~\cite{kiayias2017ouroboros} \\
		
		Storage / space & Yes & Sublinear &
		SpaceMint~\cite{dodd2021spacemint} \\
		
		Trusted hardware / timers & Yes & Sublinear &
		PoET; SGX-based designs \\
		
		Social identity graphs & Partially & Bounded (non-economic) &
		SybilGuard~\cite{yu2008sybilguard} \\
		
		One-time human verification & No (one-shot) & Constant setup cost &
		CAPTCHAs~\cite{vonahn2003captcha}; Proof-of-Personhood~\cite{moser2019popp} \\
		\hline
		\textbf{Human-time (PoCmt)} & \textbf{No} & \textbf{Linear per epoch} &
		\textbf{PoCmt (this work)} \\
		\hline
	\end{tabularx}
	\caption{\textbf{Taxonomy of Sybil-resistance mechanisms.}
		Parallelizable resources permit cheap identity replication and therefore
		admit sublinear or zero marginal Sybil cost. Identity and access mechanisms
		(e.g., CAPTCHAs or Proof-of-Personhood) restrict identity creation but do not
		define persistent consensus weight. PoCmt uniquely derives leader-election
		power from a non-parallelizable, time-bounded human resource, enforcing linear
		Sybil cost per active identity per epoch.}
	\label{tab:resource-taxonomy}
\end{table*}

\paragraph{Leader election, committees, and BFT finality.}
PoCmt’s protocol layer builds on standard primitives for leader election and
finality. Commitment-weighted lotteries can be instantiated using verifiable
random functions (VRFs)~\cite{micali1999verifiable} and used to sample leaders and
committees, as in committee-based designs such as Algorand~\cite{micali2016algorand,gilad2017algorand}.
For deterministic finality, PoCmt is compatible with classical and modern BFT
protocols, including PBFT~\cite{castro1999practical} and HotStuff~\cite{yin2019hotstuff}.
Our contribution is orthogonal to these protocol mechanics: we introduce a new
\emph{weighting resource} and analyze the resulting safety and liveness
properties under partial synchrony~\cite{dwork1988consensus}.

\paragraph{Alternative resource classes.}
Beyond computation and stake, several proposals tie consensus influence to other
machine-controllable resources, including storage, bandwidth, or trusted
hardware. Proof-of-Space and storage-based schemes rely on disk capacity as the
scarce resource~\cite{dodd2021spacemint}, while trusted-hardware approaches
replace economic scarcity with assumptions about secure enclaves or timers.
Although these designs alter incentive profiles or energy costs, the underlying
resource remains \emph{machine-parallelizable} and scalable via capital
investment. In contrast, PoCmt targets a scarcity class whose limiting factor is
human-time per operational window, intrinsically bounded by the number of humans
actively engaged.

\paragraph{Identity-based and human-in-the-loop mechanisms.}
A distinct line of work seeks Sybil resistance by constraining or validating
identities rather than weighting them by a resource. Graph-based defenses exploit
social-network structure~\cite{yu2008sybilguard}, while Proof-of-Personhood
ceremonies aim to approximate ``one person, one identity'' under varying trust
assumptions~\cite{moser2019popp}. Human verification mechanisms such as
CAPTCHAs~\cite{vonahn2003captcha} exploit cognitive hardness gaps but are deployed
as one-time access gates. These approaches do not define a persistent,
time-evolving consensus weight and therefore do not integrate directly with
leader election or backbone-style safety and liveness analyses.

\paragraph{Relation to PoCmt.}
PoCmt is conceptually distinct from both machine-resource consensus (PoW/PoS and
their variants) and identity-centric Sybil defenses. It does not enforce
one-person–one-identity. Instead, it models \emph{human engagement} as a scarce,
non-parallelizable, time-bound resource that must be continually replenished.
Through repeated Human Challenge Oracle (HCO) tasks and commitment-state
evolution, PoCmt introduces a mathematically defined, human-centric weighting
signal that integrates directly with standard leader-election and security
analyses. To the best of our knowledge, prior work has not provided a formal
consensus framework in which \emph{human-time per epoch} is the fundamental
scarce resource governing leader-election power.

\section{Model and Human-Time Resource}
\label{sec:model}

We present the formal model underlying Proof of Commitment (PoCmt).
The central design goal is to ground consensus weight in a
\emph{non-parallelizable human-time resource}, while retaining
epoch-based leader election and standard backbone-style reasoning.

\subsection{Time Model: Epochs and Human Windows}
\label{sec:time-model}

PoCmt separates time into two coupled layers.

\paragraph{Consensus epochs.}
The protocol proceeds in discrete consensus epochs
$t = 0,1,2,\dots$, where each epoch corresponds to a fixed
wall-clock duration $\Delta_e$ (e.g., seconds or minutes).
Leader election, block proposal, and fork choice are executed
at this timescale.

\paragraph{Human windows.}
Human-verifiable challenges are issued at a coarser
\emph{human window} scale indexed by
$d = 0,1,2,\dots$ (e.g., days).
Each human window spans $\Delta_h$ time units and contains
$E = \Delta_h / \Delta_e$ consensus epochs.
We write $d(t)$ for the human window containing epoch $t$.

This two-timescale structure reflects the operational reality
that block production is frequent, while human engagement is
naturally rate-limited and typically performed only a small
number of times per day.

\subsection{Commitment State and Score}

Each validator $v$ maintains a commitment state
\[
S_v(t) = \big(H_v(t), P_v(t), U_v(t)\big),
\]
where:
\begin{itemize}
	\item $H_v(t)$ captures accumulated human engagement,
	\item $P_v(t)$ captures protocol participation and honesty,
	\item $U_v(t)$ captures online availability.
\end{itemize}

The commitment score used for leader election at epoch $t$ is
\[
CS_v(t) = \alpha H_v(t) + \beta P_v(t) + \gamma U_v(t),
\]
for fixed nonnegative weights $(\alpha,\beta,\gamma)$.

\subsection{Human Challenge Oracle (HCO)}
\label{sec:HCO}

PoCmt models human-time as a non-parallelizable resource via a
\emph{Human Challenge Oracle} (HCO).
Unlike per-epoch challenge mechanisms, HCO issues only a small
number of challenges per human window.

\paragraph{Challenge rate and difficulty.}
For each human window $d$, the protocol specifies a
\emph{challenge rate} $k(d) \in \mathbb{N}$.
This parameter serves as a window-level \emph{difficulty} knob:
each validator is expected to solve up to $k(d)$ challenges
during window $d$ to remain fully engaged.

For each validator $v$ and index $j \in \{1,\dots,k(d)\}$,
the oracle outputs a fresh challenge
\[
\chi_{v,d,j} \leftarrow \mathrm{HCO}(v,d,j).
\]

\paragraph{Oracle assumptions.}
HCO satisfies the following properties.

\begin{description}
	\item[(H1) Human advantage under time limits.]
	For the deployed challenge family at operational difficulty $d$,
	an honest human solves within $\Delta_{\mathrm{resp}}$ with high probability,
	while any feasible automated solver succeeds with probability at most
	$\varepsilon(d)$, where $\varepsilon(d)$ is sufficiently small for security.
	
	\item[(H2) Identity binding.]
	Each challenge is bound to $(v,d,j)$; solutions cannot be reused
	across validators or challenge indices.
	
	\item[(H3) Real-time constraint.]
	A solution must be submitted within a short response window
	$\Delta_{\mathrm{resp}}$, preventing precomputation and
	long-term stockpiling.
	
	\item[(H4) Human-time parallelism bound.]
	A single human can solve at most $\tau_h = O(1)$ challenges per human window,
	independent of the number of identities controlled.
	If an adversary employs $m$ humans, the total number of challenges that can
	be solved in a window is at most $m \cdot \tau_h$.
\end{description}
In the remainder of the paper, we define the adversarial human-time capacity
as $M = m \cdot \tau_h$, representing the total number of challenges that can
be solved per window. All bounds are stated with respect to $M$.

\paragraph{Parallelism bound.}
If an adversary controls $s$ identities and hires $m$ humans, then
the total number of challenges solved in any window $d$ is at most
$m \cdot \tau_h$.
Maintaining $s$ identities at difficulty $k(d)$ therefore requires
\[
m \;\ge\; \Omega\!\left(\frac{s \cdot k(d)}{\tau_h}\right),
\]
establishing linear Sybil cost in the number of identities.

\paragraph{What kinds of challenges are envisioned?}
HCO is an idealized primitive, but the intended instantiations are closer to
\emph{rate-limited, identity-bound liveness and attention tests} than to a one-shot
account-creation CAPTCHA. Concretely, a challenge can be implemented as an
interactive micro-task that (i) requires a short real-time human response,
(ii) is bound to the validator's key and the current window identifier, and
(iii) admits public verification of a well-formed response transcript.
Examples include time-limited perceptual/matching tasks with randomized prompts,
or lightweight interactive protocols whose randomness is generated at the window
boundary, preventing precomputation. In deployments, such challenges could be
answered through a light client (UI-mediated) and verified via publicly checkable
transcripts bound to the validator key and window identifier.

\paragraph{Why the assumption is separable from consensus mechanics.}
Our protocol and proofs treat HCO as the \emph{source of non-parallelizability}:
any concrete construction is acceptable as long as it preserves the two properties
used in the analysis—identity binding (H2) and a per-human solve-rate bound per window (H4).
We emphasize that PoCmt does not require global uniqueness of humans (unlike
Proof-of-Personhood); it only requires that \emph{each additional active identity}
demands additional contemporaneous human effort, which is the core barrier behind
Theorem~\ref{thm:linear-sybil}.

\paragraph{Minimal HCO interface used by the analysis.}
Formally, our security arguments depend on HCO only through:
(i) non-reusability across identities (H2) under fresh randomness per window, and
(ii) a per-human solve-rate bound within each window (H4).
All other aspects (task format, UX, delivery channel, and verification plumbing)
are orthogonal to the consensus layer as long as these two guarantees hold.
This isolates the consensus proofs from any particular CAPTCHA-like instantiation.

\paragraph{Human--machine gap assumption.}
The security of PoCmt relies on the existence of tasks for which humans retain a
non-negligible advantage over automated solvers within a bounded time window.
As with cryptographic hardness assumptions, this gap is treated as a modeling
assumption rather than a permanent guarantee.
Crucially, PoCmt does not require tasks to remain globally or indefinitely human-only:
it suffices that, for each operational window, there exists a class of challenges
whose automated success probability remains sufficiently low relative to real-time
human performance. We revisit the long-term fragility of this assumption under
advancing automation as an explicit limitation in Section~\ref{sec:discussion}.
\paragraph{Minimal assumptions required for the analysis.}
Importantly, the security proofs of PoCmt rely only on two properties of HCO:
(i) \emph{identity binding}, which prevents reuse or transfer of solutions
across identities, and (ii) a \emph{per-human rate bound} on solvable challenges
per window.
No assumption is made about the semantic content of challenges beyond these
properties. Any construction satisfying these constraints preserves all
theorems in Sections~\ref{sec:cost} and~\ref{sec:security}.

\subsection{Commitment Dynamics}
\label{sec:dynamics}

Commitment evolves deterministically according to the following rules.

\paragraph{Human engagement (windowed boost).}
Let $x_v(d) \in \{0,\dots,k(d)\}$ be the number of challenges solved
by validator $v$ during human window $d$.
At the boundary between windows $d$ and $d{+}1$,
\[
H_v(d{+}1) = H_v(d) + \kappa_h \cdot x_v(d).
\]
Within a window, $H_v(t)$ remains constant.

\paragraph{Availability update.}
At each consensus epoch $t$,
\[
U_v(t{+}1) =
\begin{cases}
	U_v(t) + \kappa_u, & \text{if $v$ is online},\\
	U_v(t)e^{-\lambda}, & \text{if $v$ is offline},
\end{cases}
\]
for parameters $\kappa_u > 0$ and $\lambda > 0$.

\paragraph{Participation and slashing.}
If $v$ follows the protocol in epoch $t$,
\[
P_v(t{+}1) = P_v(t) + \kappa_p.
\]
If $v$ equivocates or violates protocol rules,
\[
P_v(t{+}1) = \delta \, P_v(t),
\quad 0 < \delta < 1.
\]

\paragraph{Score recomputation.}
After applying all updates, the commitment score is recomputed as
\[
CS_v(t{+}1) = \alpha H_v(t{+}1) + \beta P_v(t{+}1) + \gamma U_v(t{+}1).
\]

This dynamics cleanly separates human effort, protocol behavior,
and availability, while producing a single time-varying weight
used by the consensus protocol.

\paragraph{Participation assumptions and incentives (scope).}
Our core theorems follow the standard ``honest-but-participating'' assumption:
honest validators continue to execute the protocol and, when required, provide
the prescribed human engagement. We do not model utilities, fatigue, churn, or
validator replacement in the security proofs, since the cost-theoretic separation
hinges on the \emph{non-parallelizable} nature of human-time rather than on any
particular reward scheme. Incentives and sustained-effort considerations are
therefore treated as orthogonal deployment questions and discussed explicitly as
limitations and design space in Section~\ref{sec:discussion}.

\section{PoCmt Consensus Protocol}
\label{sec:protocol}

We now describe the PoCmt consensus protocol.
PoCmt integrates a human-time–based commitment resource into a
permissionless consensus mechanism while remaining compatible with
classical leader-election, chain-growth, and finality frameworks.
A key design feature is that PoCmt operates over two coupled timescales:
frequent \emph{consensus epochs} and coarser-grained \emph{human windows},
as formalized in Section~\ref{sec:model}.

\subsection{Two-Timescale Execution Model}
\label{sec:timescales}

Time is divided into discrete consensus epochs $t = 0,1,2,\dots$, which
drive leader election, block proposal, and message exchange.
In parallel, time is partitioned into human windows
$d = 0,1,2,\dots$, each spanning multiple epochs (e.g., a day).

The commitment state of validator $v$,
$S_v(t) = (H_v(t), P_v(t), U_v(t))$, evolves on both timescales:
\begin{itemize}
	\item \textbf{Human engagement} $H_v$ is updated only at human-window
	boundaries, based on solutions to Human Challenge Oracle (HCO)
	challenges issued during that window.
	\item \textbf{Participation} $P_v$ and \textbf{availability} $U_v$
	are updated at every epoch, reflecting protocol behavior and online
	presence.
\end{itemize}

Within a given human window, the human-engagement component $H_v$
remains constant. As a result, commitment scores vary smoothly across
epochs, ensuring that human-time contributes as a persistent, slowly
varying resource rather than a per-epoch signal.

\subsection{Protocol Overview}
\label{sec:overview}

PoCmt proceeds through a repeated sequence of window-level and
epoch-level actions.

\paragraph{Window-level actions.}
At the beginning of each human window $d$:
\begin{enumerate}
	\item The HCO issues $k(d)$ fresh identity-bound challenges
	$\{\chi_{v,d,j}\}_{j=1}^{k(d)}$ to each validator $v$.
	\item Validators submit solutions within the real-time response
	window specified by the oracle.
	\item For each validator $v$, the number of successfully solved
	challenges $x_v(d) \in \{0,\dots,k(d)\}$ is recorded.
	\item At the window boundary, the human-engagement state is updated as
	\[
	H_v(d{+}1) = H_v(d) + \kappa_h \cdot x_v(d).
	\]
\end{enumerate}

\paragraph{Epoch-level actions.}
For each consensus epoch $t$ within the current human window:
\begin{enumerate}
	\item Validators exchange consensus messages and participate in block
	proposal and validation.
	\item Participation scores $P_v(t)$ are updated; equivocation or other
	protocol violations trigger slashing.
	\item Online presence is observed; availability scores $U_v(t)$ either
	increase or decay.
	\item Commitment scores
	\[
	CS_v(t) = \alpha H_v(d(t)) + \beta P_v(t) + \gamma U_v(t)
	\]
	are recomputed.
	\item Leader (and optional committee) election is executed using
	commitment-weighted randomness.
\end{enumerate}

This separation ensures that human-time influences consensus weight
persistently over many epochs, while fast-timescale dynamics govern
safety and liveness.

\subsection{Leader Election via Commitment}
\label{sec:leader-election}

PoCmt assumes access to a public randomness source, instantiated via
verifiable random functions (VRFs), threshold randomness, or an external
beacon chain.

Each validator $v$ holds a VRF secret key $\mathsf{sk}_v$ and computes
\[
r_v(t) = \mathsf{VRF}(\mathsf{sk}_v,\; t \parallel \mathsf{rand}(t)).
\]

Validator $v$ is eligible to propose a block in epoch $t$ if
\[
r_v(t) < \tau_v(t),
\qquad
\tau_v(t) = \Theta \cdot
\frac{CS_v(t)}{\sum_{u \in V(t)} CS_u(t)},
\]
where $\Theta$ is chosen such that the expected number of eligible
leaders per epoch is close to one.

Because $H_v$ changes only at human-window boundaries, leader-election
probabilities are piecewise constant within each window. This prevents
rapid, automation-driven amplification of influence and directly
implements commitment-proportional sampling, which underlies the
fairness analysis in Lemma~\ref{lem:fairness}.

\paragraph{Optional committee formation.}
PoCmt may additionally sample a committee for BFT-style consensus using
the same VRF outputs. Each validator is selected independently with
probability
\[
p_v(t) =
c \cdot
\frac{CS_v(t)}{\sum_{u \in V(t)} CS_u(t)},
\]
for a parameter $c$ controlling committee size. This makes PoCmt
compatible with HotStuff- or Algorand-style finality mechanisms.

\subsection{Block Proposal and Validation}
\label{sec:block-proposal}

If $\ell_t$ is selected as leader in epoch $t$, it proposes a block
$B_t$ containing:
\begin{itemize}
	\item parent hash,
	\item epoch number $t$,
	\item transaction set,
	\item VRF proof of eligibility,
	\item leader signature,
	\item optional evidence related to recent commitment updates
	(e.g., window-level HCO solution transcripts).
\end{itemize}

Upon receiving $B_t$, validators perform the following checks:
\begin{enumerate}
	\item verify the VRF proof and leader signature,
	\item check consistency with local fork choice,
	\item validate transactions and detect equivocation,
	\item verify any included commitment-related evidence.
\end{enumerate}

Blocks failing any check are rejected, and valid slashing proofs are
propagated.

\paragraph{Slashing conditions.}
A validator is slashable if it:
\begin{itemize}
	\item produces conflicting blocks for the same epoch,
	\item submits invalid VRF proofs,
	\item forges or misrepresents commitment-related state.
\end{itemize}
Slashing applies a multiplicative penalty
$P_v \leftarrow \delta P_v$ with $0 < \delta < 1$, reducing adversarial
weight without erasing historical human engagement.

\subsection{Fork Choice and Optional Finality}
\label{sec:fork-choice}

PoCmt supports two deployment configurations.

\paragraph{Weighted longest-chain rule.}
Validators adopt the chain maximizing
\[
W(\text{chain}) =
\sum_{B \in \text{chain}} CS_{\text{leader}(B)}(t_B),
\]
where $t_B$ denotes the epoch in which block $B$ was proposed. Under a
weighted honest majority, the honest chain grows faster in expectation,
as required for backbone-style chain growth.

\paragraph{Optional BFT-style finality.}
PoCmt may incorporate a HotStuff-like finality gadget in which validators
vote on blocks with weight proportional to $CS_v(t)$. A block finalizes
once it collects more than $2/3$ of the total commitment weight,
yielding deterministic finality under partial synchrony.

\subsection{Epoch and Window Timeline Summary}
\label{sec:timeline}

The execution of PoCmt can be summarized as follows.

\paragraph{Human window $d$:}
\begin{enumerate}
	\item HCO emits $k(d)$ challenges $\{\chi_{v,d,j}\}$ to each validator.
	\item Validators solve challenges and update $H_v$ at the window boundary.
\end{enumerate}

\paragraph{Each epoch $t$ within window $d$:}
\begin{enumerate}
	\item Consensus messages exchanged; $P_v(t)$ updated.
	\item Online/offline status applied; $U_v(t)$ updated.
	\item Commitment scores $CS_v(t)$ recomputed.
	\item Leader (and optional committee) elected via VRF.
	\item Leader proposes $B_t$; validators verify and update forks.
	\item Slashing proofs propagated and applied.
\end{enumerate}

This completes the description of the PoCmt consensus protocol.

\section{Cost-Theoretic Analysis and Sybil Resistance}
\label{sec:cost}

This section formalizes the cost model underlying PoCmt and establishes its
central guarantee: sustaining $s$ adversarial identities over time requires
\emph{linear human-time effort}, independent of capital, hardware, or network
bandwidth. We further show that this property is unattainable in protocols whose
weight derives solely from \emph{parallelizable} resources such as computation
or stake.

A key distinction from classical models is that human engagement in PoCmt evolves
on the coarser \emph{human-window} timescale (Section~\ref{sec:time-model}),
while consensus execution occurs over fine-grained epochs. Accordingly, we
measure adversarial cost \emph{per window} and translate it into long-run
influence via the commitment score dynamics of Section~\ref{sec:dynamics}.

\subsection{Adversarial Effort Model}
\label{sec:adv-model}

Consider an adversary controlling a set $A$ of $s$ identities. Each adversarial
identity $a\in A$ maintains a commitment state $S_a(t)=(H_a(t),P_a(t),U_a(t))$ as
defined in Section~\ref{sec:model}. Maintaining nontrivial commitment requires
three qualitatively different resources:

\begin{itemize}
	\item \textbf{Human-time cost $c_h$:} solving HCO challenges during each human
	window $d$ in order to increase (or sustain growth of) $H_a$.
	
	\item \textbf{Protocol/availability cost $c_p$:} keeping nodes online and
	participating in messaging to prevent decay of $U_a$ and to accrue $P_a$.
	
	\item \textbf{Slashing risk $c_s$:} expected loss in participation score due to
	equivocation, invalid behavior, or coordination failures.
\end{itemize}

We emphasize that $c_p$ and $c_s$ can be covered by machines and capital, whereas
$c_h$ is irreducibly tied to human cognitive effort.

\paragraph{Window-level accounting.}
Let $k(d)$ be the challenge rate (``difficulty'') in window $d$
(Section~\ref{sec:HCO}). Let $x_a(d)\in\{0,\dots,k(d)\}$ be the number of valid
solutions submitted by identity $a$ during window $d$. By the update rule in
Section~\ref{sec:dynamics}, human engagement increases as
$H_a(d{+}1)=H_a(d)+\kappa_h x_a(d)$.
Thus, any attempt to maintain many identities with growing engagement must
sustain a large aggregate number of valid solutions per window.

\paragraph{Human-time hardness.}
By the HCO assumptions (Section~\ref{sec:HCO}), challenges are identity-bound and
time-limited, and a single human can solve at most $\tau_h=O(1)$ challenges per
window. If the adversary hires $m$ humans, then the total number of challenges
it can solve in window $d$ is at most $m\cdot\tau_h$.

This yields the core constraint behind PoCmt:

\begin{center}
	\emph{Human-time is non-parallelizable beyond the number of humans engaged in
		a window.}
\end{center}

\subsection{Linear Sybil Cost in PoCmt}
\label{sec:linear-cost}

We now formalize the linear human-time barrier and connect it to sustained
adversarial influence under commitment-weighted leader election.

\begin{lemma}[Window-level linear human-time requirement]
	\label{lem:linear-cost}
	Fix a human window $d$ with challenge rate $k(d)$. For any adversary that
	maintains $s$ identities and submits in total $X(d)=\sum_{a\in A} x_a(d)$ valid
	HCO solutions in window $d$, the required human-time satisfies
	\[
	X(d) \;\le\; m\cdot \tau_h,
	\]
	where $m$ is the number of humans available to the adversary. In particular,
	achieving $x_a(d)\ge 1$ for all $s$ identities (nontrivial engagement across
	all identities) requires $m=\Omega(s/\tau_h)$, and achieving
	$x_a(d)=k(d)$ for all identities requires $m=\Omega(sk(d)/\tau_h)$.
\end{lemma}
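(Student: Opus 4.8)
The plan is to derive the bound $X(d) \le m \cdot \tau_h$ directly from the Human-time parallelism bound (H4), and then obtain the two $\Omega$-lower bounds on $m$ by contrapositive counting. This lemma is essentially an accounting statement: it repackages assumption (H4) at the aggregate level across all $s$ adversarial identities and then specializes it to two natural engagement targets.

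First I would fix the window $d$ and unpack the definitions. The quantity $x_a(d) \in \{0,\dots,k(d)\}$ counts the \emph{valid} HCO solutions submitted by identity $a$ in window $d$. By identity binding (H2), each such valid solution is bound to a specific triple $(a,d,j)$ and cannot be reused across identities or indices, so the solutions counted by $X(d) = \sum_{a \in A} x_a(d)$ are all distinct challenge-instances that some human actually solved. The key observation is that every valid solution must have been produced by one of the adversary's $m$ humans, and by (H4) each human produces at most $\tau_h$ solutions per window regardless of how many identities it is servicing. Summing this per-human cap over the $m$ humans yields the global ceiling
\[
X(d) \;=\; \sum_{a \in A} x_a(d) \;\le\; \sum_{i=1}^{m} \tau_h \;=\; m \cdot \tau_h,
\]
which is exactly the claimed inequality.

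For the two lower bounds on $m$ I would argue contrapositively from this ceiling. To achieve $x_a(d) \ge 1$ for all $s$ identities, the aggregate must satisfy $X(d) \ge s$; combined with $X(d) \le m\tau_h$ this forces $m \ge s/\tau_h$, i.e. $m = \Omega(s/\tau_h)$. Likewise, achieving the maximal engagement $x_a(d) = k(d)$ for every identity requires $X(d) = s\,k(d)$, whence $m \ge s\,k(d)/\tau_h = \Omega(s\,k(d)/\tau_h)$. In each case the lower bound on the number of humans follows from dividing the required aggregate solution count by the per-human capacity $\tau_h = O(1)$.

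I do not expect a genuine technical obstacle here, since the statement is a counting consequence of (H2) and (H4) rather than a probabilistic or algorithmic claim. The only subtlety worth stating explicitly is that the argument counts \emph{valid} solutions and must rely on (H2) to ensure that no single human-produced solution can be double-counted across multiple identities—otherwise one could inflate $X(d)$ without inflating human effort. A secondary point to make precise is that the bound is stated per window and treats $\tau_h$ as a fixed $O(1)$ constant, so the $\Omega(\cdot)$ expressions absorb $\tau_h$ into the hidden constant; the linear dependence on $s$ (and on $s\,k(d)$) is the substantive content, and it is this per-window linearity that later amplifies, over $T$ windows, into the $\Theta(sT)$ separation of Theorem~\ref{thm:linear-sybil}.
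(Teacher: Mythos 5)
Your proof is correct and follows essentially the same counting argument as the paper's own proof sketch: (H2) prevents double-counting of solutions across identities, (H4) caps each human at $\tau_h$ solves per window, summing gives $X(d)\le m\cdot\tau_h$, and the two $\Omega$-bounds on $m$ follow by dividing the required aggregate solution count by $\tau_h$. The only minor difference is that the paper additionally invokes (H3), the real-time response window, to rule out precomputation and stockpiling --- this is what justifies charging all of $X(d)$ to human effort expended \emph{within} window $d$ rather than amortized over earlier windows --- whereas your argument assumes this per-window attribution implicitly.
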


\begin{proof}[Proof sketch]
	By (H2) identity binding, solutions cannot be reused across identities or
	challenge indices. By (H3) the response window prevents precomputation or
	stockpiling. By (H4), each human can solve at most $\tau_h$ challenges per
	window. Therefore $m$ humans can solve at most $m\tau_h$ challenges in that
	window, implying the stated bounds.
\end{proof}

Lemma~\ref{lem:linear-cost} is purely about \emph{feasible engagement} per window.
We now relate it to adversarial commitment growth relative to honest validators.

\paragraph{Honest growth baseline.}
Under the honest-but-participating assumption, honest validators solve their
challenges and remain online with high probability, so their commitment grows
monotonically: $H$ increases at window boundaries, while $P$ and $U$ typically
increase across epochs (Section~\ref{sec:dynamics}).

\paragraph{Adversarial asymmetry.}
If adversarial identities skip HCO solutions, their $H_a$ stagnates at the
window boundary. If they go offline, $U_a$ decays exponentially; if they
misbehave, $P_a$ is multiplicatively slashed. Thus, without sufficient human-time
capacity, large Sybil sets inevitably contain many identities that fail to
accumulate engagement and cannot sustain high commitment scores.

\begin{lemma}[Sublinear human-time implies bounded adversarial weight]
	\label{lem:adv-vs-honest}
	If an adversary controlling $s$ identities has human-time capacity
	$m=o(s)$ per window (equivalently, $m\tau_h=o(s)$ solved challenges per window),
	then the aggregate adversarial commitment weight cannot track the honest
	aggregate weight asymptotically. Concretely, for sufficiently large time,
	\[
	W_A(t) \ll W_H(t),
	\]
	where $W_A(t)=\sum_{a\in A} CS_a(t)$ and $W_H(t)$ is defined analogously.
\end{lemma}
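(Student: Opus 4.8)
The plan is to bound the adversarial aggregate weight $W_A(t)$ from above and the honest aggregate $W_H(t)$ from below, and show the gap grows without bound. The three commitment components $H$, $P$, $U$ must be controlled separately, since each is governed by a different update rule, and then combined through the linear score $CS_v = \alpha H_v + \beta P_v + \gamma U_v$. The key structural fact I would exploit is that among the three components, only $H$ is gated by the non-parallelizable human-time constraint of Lemma~\ref{lem:linear-cost}; the $P$ and $U$ contributions can in principle be sustained by machines and capital, so I must argue that the human-engagement term alone forces the separation, and that the other terms cannot compensate asymptotically.

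First I would handle the human-engagement component. By Lemma~\ref{lem:linear-cost}, the total number of valid HCO solutions the adversary can submit per window is at most $m\tau_h = o(s)$. Since $H_a$ increases only by $\kappa_h x_a(d)$ at each window boundary, the aggregate adversarial human engagement satisfies $\sum_{a\in A} H_a(d) \le \kappa_h \sum_{d' < d} X(d') \le \kappa_h\, d\, m\tau_h = o(s)\cdot O(d)$. Over $d$ windows this grows like $o(sd)$. In contrast, under the honest-but-participating baseline, each of the honest validators solves up to $k(d)$ challenges per window with high probability, so the honest aggregate human engagement grows like $\Theta(n_H \cdot d)$ where $n_H$ is the honest identity count; crucially its per-identity growth rate is bounded below by a positive constant, whereas the adversary's per-identity average engagement is $o(1)$ across its $s$ identities. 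The ratio $\big(\sum_A H_a\big)/\big(\sum_H H_u\big)$ therefore tends to zero.

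Next I would dispatch the $P$ and $U$ terms. Both are bounded per identity at any fixed time: $U_a(t) \le U_a(0) + \kappa_u t$ and $P_a(t) \le P_a(0) + \kappa_p t$, so the adversarial aggregate of these terms is at most $O(s\cdot t)$. The point is that these are precisely the terms accessible to honest validators as well, at the same linear-in-$t$ per-identity rate; they do not favor the adversary. To make the separation clean I would introduce the standing assumption (consistent with the paper's setup) that the weights satisfy $\alpha > 0$, i.e. human engagement carries nonzero weight, and that the protocol difficulty $k(d)$ is bounded below by a positive constant so honest $H$-growth is genuinely linear in the number of windows. Then, writing $E$ epochs per window so that $t \asymp Ed$, the honest score per identity grows at rate $\Theta(E)$ from $P,U$ plus an additional $\Theta(\alpha\kappa_h k)$ per window from $H$, while the adversary's $H$-contribution is asymptotically negligible per identity. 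Comparing $W_H(t) = \Theta(n_H t)$ against $W_A(t) = O(st)$ does not by itself give $W_A \ll W_H$ when $s \gg n_H$; the separation must come specifically from the $\alpha H$ term.

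The main obstacle is exactly this last point, and it is where the statement as written is most delicate. If the adversary controls $s \gg n_H$ identities, the bounded-per-identity $P$ and $U$ terms let $W_A(t)$ scale like $st$, which dominates $W_H(t)\sim n_H t$; so the claimed conclusion $W_A \ll W_H$ cannot hold from the commitment dynamics alone without an additional mechanism. The honest resolution is to invoke availability decay and slashing: an adversary with human-time $m=o(s)$ cannot keep $\Theta(s)$ identities simultaneously \emph{active} in the sense required to accrue $P$ and $U$, because inactive identities incur exponential $U$-decay ($U_a(t)\,e^{-\lambda}$ per offline epoch) and because sustaining eligibility/participation without solving challenges eventually triggers the engagement gate. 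The rigorous version requires an additional protocol assumption—stated elsewhere in the paper as the \emph{commitment drift invariant}—namely that low-$H$ identities are suppressed (e.g. a minimum-engagement threshold on eligibility, or that $U,P$ growth is conditioned on passing HCO). I would therefore make explicit that the intended regime couples the three components so that $H$ acts as a gate: once that coupling is assumed, the $o(s)$ human-time bound propagates to an $o(s)$ bound on the number of identities that can maintain nonnegligible $CS_a$, restoring $W_A(t)\ll W_H(t)$. Flagging this dependency—rather than deriving $W_A\ll W_H$ purely from the additive score—is the crux of an honest proof.
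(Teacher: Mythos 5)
Your treatment of the human-engagement component is essentially the paper's: Lemma~\ref{lem:linear-cost} caps the adversary's total solved challenges per window at $m\tau_h=o(s)$, so aggregate adversarial $H$ gains $o(s)$ per window while honest aggregate $H$ gains a per-identity amount bounded below by a constant. Where you diverge is in refusing to let the $P$ and $U$ components ride for free, and here your diagnosis is correct --- and it is the paper's own proof sketch that has the hole. The paper asserts that identities with stagnant $H_a$ fail to sustain high $CS_a(t)$ ``due to availability decay and/or slashing risk,'' but under the dynamics of Section~\ref{sec:dynamics} availability decay applies only to offline identities and slashing only to misbehaving ones, and the paper itself concedes in Section~\ref{sec:adv-model} that staying online and protocol-compliant is a machine/capital cost. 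An adversary running all $s$ Sybils as always-online, never-equivocating machine nodes accrues $\beta\kappa_p t+\gamma\kappa_u t$ per identity, giving $W_A(t)=\Omega\!\left(s(\beta\kappa_p+\gamma\kappa_u)\,t\right)$ against $W_H(t)=O(n_H t)$ for a fixed honest population $n_H$; for $s$ large relative to $n_H$ the claimed conclusion $W_A(t)\ll W_H(t)$ then fails outright whenever $\beta$ or $\gamma$ is positive. Your proposed repairs --- gating $P/U$ accrual or eligibility on recent engagement, restricting $s$ relative to $n_H$, or setting $\beta=\gamma=0$ --- are exactly the kind of added hypothesis the lemma needs, and flagging the dependency rather than deriving the conclusion from the additive score alone is the honest move. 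My only correction: do not attribute the fix to the ``commitment drift invariant'' of Lemma~\ref{lem:drift}, since that result is downstream of this one and inherits the same gap; the coupling must be introduced in the dynamics or in this lemma's hypotheses, not borrowed from a later statement.
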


\begin{proof}[Proof sketch]
	With $m=o(s)$, Lemma~\ref{lem:linear-cost} implies that only a vanishing fraction
	of identities can receive even a constant number of valid solutions per window.
	Those identities have stagnant $H_a$ and, due to availability decay and/or
	slashing risk, do not sustain high $CS_a(t)$ in aggregate. Meanwhile, honest
	identities accumulate engagement and maintain participation/availability, so
	$W_H(t)$ grows at a strictly larger effective rate. This yields a persistent
	gap and therefore $W_A(t)\ll W_H(t)$ asymptotically.
\end{proof}

We now state the main cost-theoretic result.

\begin{theorem}[PoCmt achieves linear Sybil cost]
	\label{thm:linear-sybil}
	Sustaining $s$ adversarial identities with nontrivial commitment over $W$
	human windows requires
	\[
	\Omega(sW)
	\]
	units of human-time effort (equivalently, $\Omega(s)$ per window). Consequently:
	(i) PoCmt is asymptotically Sybil-hard; (ii) capital, hardware, and bandwidth
	cannot substitute for human engagement; and (iii) identity-amplification
	strategies effective in PoW/PoS do not apply.
\end{theorem}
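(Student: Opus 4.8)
The plan is to reduce Theorem~\ref{thm:linear-sybil} to the per-window feasibility bound already established in Lemma~\ref{lem:linear-cost} and then aggregate across the $W$ windows. First I would fix what ``nontrivial commitment'' means operationally: an adversarial identity $a$ sustains nontrivial commitment over the horizon only if its human-engagement component $H_a$ does not stagnate relative to the honest baseline, which by the windowed-boost rule $H_a(d{+}1)=H_a(d)+\kappa_h x_a(d)$ requires $x_a(d)\ge 1$ for a constant fraction of windows. This is the step where the theorem must be pinned down, since without a definition of ``nontrivial'' one could trivially satisfy the claim by letting all identities decay; I would therefore adopt the reading that each of the $s$ identities must receive at least a constant number of valid HCO solutions per window to keep pace with honest growth (appealing to Lemma~\ref{lem:adv-vs-honest} to justify that stagnant-$H$ identities contribute negligibly to $W_A$).

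Next I would invoke Lemma~\ref{lem:linear-cost} directly: in any single window $d$, the total valid solutions satisfy $X(d)=\sum_{a\in A}x_a(d)\le m\cdot\tau_h$, so forcing $x_a(d)\ge 1$ across all $s$ identities forces $m\tau_h\ge s$, i.e. the adversary must expend at least $s$ units of human-time (solved challenges) in that window. Since $\tau_h=O(1)$ by (H4), this is $\Omega(s)$ human-time per window. Summing the per-window cost over the $W$ windows in the horizon, and using that the response-window constraint (H3) prevents amortizing or precomputing effort across windows, the total human-time is $\sum_{d} \Omega(s)=\Omega(sW)$. The three corollaries then follow: (i) Sybil-hardness is immediate because the cost grows unboundedly in $s$; (ii) the substitution claim follows from the model decomposition in Section~\ref{sec:adv-model}, where only $c_h$ is tied to human effort and (H4) bounds it independently of capital or hardware; and (iii) the non-applicability of PoW/PoS amplification follows because identity replication cannot divide a fixed human-time budget without violating the per-human solve-rate bound.

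I expect the main obstacle to be the aggregation step, specifically ruling out cross-window amortization and temporal reallocation of human effort. A sophisticated adversary might try to ``bank'' engagement by solving many challenges in some windows and none in others, or rotate which identities are kept alive, thereby lowering the instantaneous per-window demand below $\Omega(s)$. The key technical point is that (H3)'s real-time response constraint and (H2)'s identity-and-index binding jointly forbid stockpiling: a solution valid for window $d$ cannot be produced before window $d$ opens nor reused in window $d'\neq d$. Consequently the per-window bound $X(d)\le m\tau_h$ is a genuine hard constraint in each window independently, and the decay dynamics of $U_a$ (exponential under (H4)'s offline rule) punish any identity left dormant across a window. Establishing rigorously that rotation does not help—that an adversary cannot maintain $s$ simultaneously nontrivial identities while only paying for a rotating subset—is the crux, and I would handle it by observing that ``nontrivial commitment'' is required contemporaneously for all $s$ identities at the epochs where their aggregate weight $W_A(t)$ is measured, so the per-window demand cannot be deferred.
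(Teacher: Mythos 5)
Your proposal follows essentially the same route as the paper's own proof sketch: invoke Lemma~\ref{lem:linear-cost} for the per-window $\Omega(s)$ bound, use Lemma~\ref{lem:adv-vs-honest} to argue that sublinear capacity forfeits influence, and sum over the $W$ windows. Your additional care in pinning down what ``nontrivial commitment'' means and in ruling out cross-window stockpiling and identity rotation via (H2)/(H3) addresses gaps the paper's sketch leaves implicit, so the argument is, if anything, more complete than the original.
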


\begin{proof}[Proof sketch]
	By Lemma~\ref{lem:linear-cost}, maintaining engagement across $s$ identities
	requires $\Omega(s)$ solved challenges per window and hence $\Omega(s)$ human-time
	per window. Lemma~\ref{lem:adv-vs-honest} shows that sublinear capacity causes
	adversarial weight to fall behind honest weight, preventing sustained influence.
	Summing over $W$ windows yields $\Omega(sW)$ total human-time.
\end{proof}

\paragraph{Remarks: outsourcing and coordination.}
PoCmt does not prevent an adversary from outsourcing HCO solutions by hiring many
humans, nor does it enforce one-person--one-identity. These behaviors are allowed
by the model and do not contradict Theorem~\ref{thm:linear-sybil}. Rather, they
clarify the security regime: PoCmt shifts adversarial scaling from
machine-parallelizable resources (capital/hardware reuse) to explicit labor
capacity. Coordination can aggregate human effort, but cannot compress or
multiply it across identities; influence remains proportional to contemporaneous
human-time per window.

\subsection{Impossibility of Linear Sybil Cost in Parallelizable-Resource Protocols}
\label{sec:impossibility}

We contrast PoCmt with protocols whose influence derives from a
\emph{parallelizable resource}, such as computational power or stake.

\paragraph{Resource-parallelizable protocols.}
Let $R$ denote the adversary’s total resource. Such protocols satisfy:
\begin{enumerate}
	\item \textbf{Additivity:} influence depends only on the total resource
	$R_{\mathrm{tot}}=\sum_i R_i$, not on identity count.
	\item \textbf{Divisibility:} $R$ can be split arbitrarily across identities at
	negligible cost.
	\item \textbf{Monotonicity:} influence is non-decreasing in $R$.
\end{enumerate}
These assumptions capture PoW, PoS, and related designs.

\begin{lemma}[Zero marginal Sybil cost]
	\label{lem:zero-marginal}
	In any resource-parallelizable protocol, an adversary with resource $R$ can
	create $s$ identities with the \emph{same total influence} at cost $C(R)$,
	independent of $s$. Thus the marginal Sybil cost is asymptotically zero.
\end{lemma}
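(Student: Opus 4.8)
The plan is to reduce the claim to an immediate consequence of the additivity and divisibility axioms, after first making the cost accounting precise. First I would encode the influence of a resource profile $(R_1,\dots,R_s)$ across $s$ identities as a function $I(R_1,\dots,R_s)$, and invoke additivity to write $I(R_1,\dots,R_s)=\Phi\!\big(\sum_i R_i\big)$ for a single-variable map $\Phi$ that is non-decreasing by monotonicity. This step isolates the one fact that does the work: influence is a function of the aggregate resource alone and is blind to how that resource is partitioned across identities.

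Next I would set up the comparison between a single-identity configuration and an $s$-identity configuration delivering the same total influence. Fix a target resource level $R$ and let $C(R)$ be the cost of acquiring it. The single-identity adversary holds all of $R$ on one identity and achieves influence $\Phi(R)$ at cost $C(R)$. By divisibility, for any $s$ the adversary may choose a partition $R=\sum_{i=1}^{s} R_i$ and instantiate $s$ identities with resources $R_1,\dots,R_s$, incurring only the (assumed negligible) splitting overhead. Since $\sum_i R_i=R$, additivity gives total influence $\Phi(R)$ --- identical to the single-identity case --- so the $s$ identities carry exactly the same total influence.

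It remains to convert this into the asymptotic marginal-cost statement. I would define the marginal Sybil cost as the incremental expenditure required to add one identity while holding total influence fixed, and the amortized Sybil cost as total cost divided by identity count. Under the comparison above, the full cost of the $s$-identity configuration is $C(R)$ plus splitting overhead, i.e.\ $C(R)(1+o(1))$, so the incremental cost of each additional identity is only the negligible splitting term, and the amortized cost $C(R)/s \to 0$ as $s\to\infty$. Both readings yield asymptotically zero marginal Sybil cost, which establishes the lemma.

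The main obstacle I anticipate is not the combinatorics but the definitional hygiene around \emph{negligible cost}. The three axioms are qualitative, so to license an honest $\to 0$ conclusion I must quantify the splitting overhead---for instance as $O(1)$ per identity or as $o(C(R))$ in aggregate---and confirm this matches how PoW/PoS deployments actually replicate keys and split hashpower or stake. Once that modeling choice is fixed, the argument collapses to the single observation that additivity plus divisibility make identity count a free parameter; the lemma then contrasts sharply with the $\Omega(s)$ lower bound of Lemma~\ref{lem:linear-cost}, which is precisely the separation the paper is after.
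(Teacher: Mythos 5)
Your proposal is correct and follows essentially the same route as the paper's proof: divisibility lets you partition $R$ across $s$ identities, additivity keeps the total influence at $\Phi(R)$, and the only $s$-dependent expense is the negligible identity-creation overhead, so the cost remains $C(R)$. Your additional care in formalizing the "negligible splitting cost" and distinguishing incremental from amortized marginal cost is a reasonable tightening, but it does not change the argument.
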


\begin{proof}
	By divisibility, split $R$ across $s$ identities. By additivity, total influence
	remains unchanged. Identity creation is cheap relative to acquiring $R$, hence
	the cost depends only on $R$ and not on $s$.
\end{proof}

\begin{corollary}[Impossibility of linear Sybil cost in PoW/PoS]
	\label{cor:impossibility}
	Any protocol based solely on parallelizable resources cannot enforce linear
	Sybil cost. In particular, PoW and PoS admit asymptotically zero marginal cost
	for adding identities.
\end{corollary}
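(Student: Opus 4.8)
The plan is to obtain Corollary~\ref{cor:impossibility} as an immediate consequence of Lemma~\ref{lem:zero-marginal}, once the notion of \emph{linear Sybil cost} is made precise. Following the guarantee established for PoCmt in Theorem~\ref{thm:linear-sybil}, I would say that a protocol \emph{enforces linear Sybil cost} if there is a function $\mathrm{Cost}(s)$ measuring the resource expenditure required to sustain $s$ identities at a fixed target influence, and $\mathrm{Cost}(s)=\Omega(s)$; equivalently, the marginal cost $\mathrm{Cost}(s)/s$ is bounded away from zero as $s$ grows. The corollary then reduces to showing that no protocol satisfying additivity, divisibility, and monotonicity can meet this lower bound.

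The core step is a short contradiction argument. Fix any resource-parallelizable protocol and any adversarial budget $R$. By Lemma~\ref{lem:zero-marginal}, the adversary can split $R$ across $s$ identities and, by additivity, retain the same aggregate influence, at total cost $C(R)$ that is independent of $s$ (identity creation being negligible relative to acquiring $R$). Hence $\mathrm{Cost}(s)\le C(R)$ for every $s$, so the marginal cost satisfies $\mathrm{Cost}(s)/s \le C(R)/s \to 0$. This directly contradicts $\mathrm{Cost}(s)=\Omega(s)$, establishing that linear Sybil cost is unenforceable in this class.

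To specialize to PoW and PoS, I would verify that both instantiate the three structural properties. In PoW, influence is the fraction of total hashpower, which depends only on the aggregate $R_{\mathrm{tot}}=\sum_i R_i$ (additivity), hashpower can be partitioned across mining identities at no marginal cost (divisibility), and block-production probability is nondecreasing in hashpower (monotonicity); key generation for new identities is essentially free. In PoS, the identical argument holds with stake in place of hashpower: selection weight is proportional to total stake, stake can be re-keyed and distributed across arbitrarily many validator identities at negligible cost, and influence is monotone in stake. Both therefore fall under Lemma~\ref{lem:zero-marginal}, yielding asymptotically zero marginal Sybil cost.

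The main subtlety---rather than any deep obstacle---is definitional: the argument must cleanly separate the fixed cost of \emph{acquiring} the weighting resource $R$ from the \emph{marginal} cost of replicating identities, since it is precisely the negligibility of the latter (cheap key generation, free partitioning of $R$) that drives the contradiction. Getting this decomposition right, and confirming that the ``same total influence'' comparison in Lemma~\ref{lem:zero-marginal} is the correct benchmark against the honest-relative influence used in Theorem~\ref{thm:linear-sybil}, is the only point requiring care; the remainder is a direct logical consequence.
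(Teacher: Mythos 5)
Your proposal is correct and follows essentially the same route as the paper: the paper states Corollary~\ref{cor:impossibility} as an immediate consequence of Lemma~\ref{lem:zero-marginal} (with no further proof given), and your argument is precisely that derivation, made explicit by pinning down what ``linear Sybil cost'' means ($\mathrm{Cost}(s)=\Omega(s)$) and checking that PoW and PoS satisfy additivity, divisibility, and monotonicity. The extra care you take in separating the fixed acquisition cost of $R$ from the negligible marginal cost of identity replication is exactly the implicit content of the paper's lemma, so nothing is missing.
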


This establishes a sharp cost-theoretic separation: PoCmt lies outside the class
of parallelizable-resource protocols and achieves a property provably unattainable
in PoW/PoS-style systems.

\section{Security Analysis of PoCmt}
\label{sec:security}

This section argues that PoCmt satisfies backbone-style security properties
under the model of Section~\ref{sec:model} and the human-time constraints of
Section~\ref{sec:HCO}. A key subtlety is the two-timescale structure:
human engagement $H_v$ is updated on the coarse \emph{human-window} timescale,
while consensus progresses over fine-grained epochs. Accordingly, our arguments
combine (i) per-epoch leader-election probabilities induced by the current
commitment weights, with (ii) window-scale \emph{drift} of total honest versus
adversarial weight.

\paragraph{Assumptions, scope, and robustness.}
PoCmt operates under the standard partial synchrony assumption
(Section~\ref{sec:protocol}) and requires validators to maintain reasonably
stable connectivity to preserve availability score $U_v(t)$.
Validators with persistently poor network access may experience availability
decay despite honest intent. This limitation is not unique to PoCmt:
leader-based and committee-based protocols fundamentally rely on timely message
exchange after GST. Importantly, availability in PoCmt is a \emph{soft weighting
	signal} rather than a correctness condition: temporary unavailability reduces
future influence but does not invalidate past blocks, trigger slashing, or
create safety violations. Thus, connectivity primarily affects \emph{relative
	weight} and \emph{liveness bounds}, not protocol correctness.

Finally, our results rely on the Human Challenge Oracle assumption that there
exist time-bounded tasks for which humans retain an advantage over automated
solvers (Section~\ref{sec:HCO}). As in cryptographic hardness assumptions, this
gap is treated as a modeling primitive: PoCmt requires a practical human--machine
gap over bounded operational horizons, not an absolute or permanent guarantee.

\subsection{Weighted Honest Majority and Drift Invariant}
\label{sec:drift}

Let $CS_v(t)$ be the commitment score used for leader election at epoch $t$ and
define total honest and adversarial weight:
\[
W_H(t)=\sum_{v\in H} CS_v(t),
\qquad
W_A(t)=\sum_{a\in A} CS_a(t),
\qquad
W_{\mathrm{tot}}(t)=W_H(t)+W_A(t).
\]

\begin{definition}[Weighted honest majority]
	\label{def:whm}
	PoCmt satisfies a weighted honest majority if there exists $\rho<1/2$ such that
	for all epochs $t$,
	\[
	W_A(t) \le \rho \cdot W_{\mathrm{tot}}(t).
	\]
\end{definition}

Definition~\ref{def:whm} is the standard condition under which commitment-weighted
leader election favors honest participants. The central reason this condition is
stable in PoCmt is cost-theoretic: adversarial weight cannot scale without linear
human-time per window (Theorem~\ref{thm:linear-sybil}), whereas honest validators
continue to accrue engagement, participation, and availability under normal
operation (Section~\ref{sec:dynamics}).

\begin{lemma}[Commitment drift invariant (window-scale)]
	\label{lem:drift}
	Assume that after GST honest validators (i) solve HCO challenges when issued
	and (ii) remain online except for transient faults. If the adversary invests
	sublinear human-time capacity $m=o(s)$ per window (equivalently, solves
	$o(s)$ identity-bound challenges per window), then across window boundaries
	the honest advantage does not decrease: for any window boundary times $t<t'$,
	\[
	W_H(t') - W_A(t') \;\ge\; W_H(t) - W_A(t).
	\]
\end{lemma}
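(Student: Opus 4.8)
The plan is to track the honest-minus-adversarial weight gap $D(t) := W_H(t) - W_A(t)$ and show it is non-decreasing across each individual window boundary; the statement for arbitrary boundary times $t < t'$ then follows by telescoping over the intervening boundaries. I would fix one boundary separating window $d$ from $d{+}1$, spanning the $E$ epochs of window $d$, and write $\Delta D = \Delta W_H - \Delta W_A$ for the change induced over this window. Using the commitment score $CS_v = \alpha H_v + \beta P_v + \gamma U_v$ together with the three deterministic update rules of Section~\ref{sec:dynamics}, the approach is to decompose $\Delta D$ into its human-engagement ($\alpha H$), participation ($\beta P$), and availability ($\gamma U$) contributions and bound each side separately.

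For the honest aggregate I would lower-bound the growth. By hypothesis (i) every honest validator solves all $k(d)$ issued challenges, so the honest $H$-weight rises by exactly $\alpha\kappa_h k(d)\,|H|$ at the boundary; by hypothesis (ii) honest validators remain online and follow the protocol up to transient faults, accruing $\beta\kappa_p$ per epoch in $P$ and $\gamma\kappa_u$ per online epoch in $U$, for an aggregate lower bound of order $(\beta\kappa_p + \gamma\kappa_u)\,E\,|H|$ minus a lower-order term for rare faults. For the adversary I would upper-bound the growth. By Lemma~\ref{lem:linear-cost}, the total valid solutions obey $X(d)\le M = m\tau_h = o(s)$, so the adversarial $H$-weight can rise by at most $\alpha\kappa_h X(d) = \alpha\kappa_h\cdot o(s)$, while the participation and availability increments contribute at most $(\beta\kappa_p + \gamma\kappa_u)\,E\,s$ if every identity is kept online and protocol-compliant. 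Crucially, both slashing ($P \mapsto \delta P$, $\delta<1$) and offline decay ($U \mapsto U e^{-\lambda}$) can only decrease the adversarial total, so discarding them is safe for an upper bound.

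Combining these, the human-engagement differential $\alpha\kappa_h\big(k(d)\,|H| - X(d)\big)$ is strictly positive and of order $\Theta(|H|)$, whereas the adversarial $H$-term is only $o(s)$; this is where the sublinear-capacity hypothesis does the essential work. The \emph{main obstacle} is the participation/availability channel: because $P$ and $U$ are machine-coverable, a large Sybil set can in principle inflate the adversarial $(\beta P + \gamma U)$ total at rate $\Theta(s)$, which for $s \gg |H|$ would threaten to overwhelm the favorable $H$-gap. I expect to resolve this in one of two ways, and would state the operative assumption explicitly. In the weighted-honest-majority regime of Definition~\ref{def:whm} the adversary is constrained to $W_A \le \rho\,W_{\mathrm{tot}}$, so identities that cannot accumulate $H$ but merely idle online are already absorbed into the weight cap; combined with the availability decay that penalizes any identity the adversary fails to keep perfectly live, the surviving adversarial $(\beta P + \gamma U)$ growth is dominated by that of the honest set. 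Alternatively, under the parameter regime in which the human-engagement weight $\alpha$ dominates—so that $\alpha\kappa_h k(d)$ is large relative to $(\beta\kappa_p+\gamma\kappa_u)E$—the $\Theta(|H|)$-versus-$o(s)$ gap in the $H$-component alone overwhelms the $P,U$ differential, yielding $\Delta D \ge 0$.

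Finally, I would telescope. Writing the window boundaries in $[t,t']$ as $t = t_0 < t_1 < \cdots < t_\ell = t'$ and summing $\Delta D \ge 0$ across each sub-boundary gives $D(t') \ge D(t)$, i.e. $W_H(t') - W_A(t') \ge W_H(t) - W_A(t)$, as claimed. Transient honest faults are handled by noting they occur with low probability and reduce the honest lower bound only by a lower-order additive term that does not disturb the dominant $\Theta(|H|)$ human-engagement advantage.
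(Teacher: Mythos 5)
Your overall strategy---telescoping the gap $D(t)=W_H(t)-W_A(t)$ over individual window boundaries, decomposing the increment into the $\alpha H$, $\beta P$, and $\gamma U$ channels, lower-bounding honest growth via hypotheses (i)--(ii), and upper-bounding adversarial engagement growth via Lemma~\ref{lem:linear-cost}---is the same skeleton as the paper's proof sketch. Where you go beyond the paper is in explicitly flagging the participation/availability channel as the \emph{main obstacle}: since $P_v$ and $U_v$ are machine-coverable, an adversary that simply keeps all $s$ Sybil identities online and protocol-compliant accrues $(\beta\kappa_p+\gamma\kappa_u)E$ per identity per window, an aggregate of order $\Theta(s)$ that can exceed the honest side's $\Theta(|H|)$ gain (including the $H$-advantage) whenever $s\gg|H|$. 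This is a real hole, and it is present in the paper's own sketch as well: the sketch assumes adversarial availability ``decays when rotated offline'' and that equivocation triggers slashing, but nothing in the hypotheses forces the adversary to rotate identities offline or to misbehave, and the paper's own cost model (Section~\ref{sec:adv-model}) concedes that $c_p$ is coverable by machines and capital. You were right not to paper over this.

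However, neither of your two proposed resolutions closes the gap as written. The appeal to Definition~\ref{def:whm} does not work: $W_A(t)\le\rho\,W_{\mathrm{tot}}(t)$ is a cap on the \emph{current} weight share, not on the per-window \emph{increment} $\Delta W_A$, so it does not prevent $\Delta W_A>\Delta W_H$ over the next window; worse, the paper uses the drift invariant to justify the stability of the weighted honest majority, so invoking that majority to prove the drift invariant risks circularity. Your second resolution---a parameter regime in which $\alpha\kappa_h k(d)\,|H|$ dominates $(\beta\kappa_p+\gamma\kappa_u)E\,(s-|H|)$, or equivalently an assumption such as $s=O(|H|)$ or $\beta=\gamma=0$---does make the argument go through, but it is an additional hypothesis absent from the lemma statement and should be stated as such rather than offered as one of two interchangeable options. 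The honest conclusion is that the lemma requires strengthening its hypotheses (or restricting the claim to the engagement component $\alpha H$ alone, where the $\Theta(k(d)|H|)$-versus-$o(s)$ comparison you give is clean and correct); your write-up identifies exactly where, which is more than the paper's sketch does.
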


\begin{proof}[Proof sketch]
	Over each window, honest validators obtain a net positive engagement increment
	through $H_v$ updates and continue accruing $P_v$ and $U_v$ at the epoch scale.
	The adversary can increase engagement only through identity-bound HCO solutions.
	Under sublinear capacity, Lemma~\ref{lem:linear-cost} implies that a growing
	fraction of adversarial identities cannot receive solutions, so their engagement
	stagnates and their availability decays when rotated offline. Any equivocation
	additionally triggers multiplicative slashing of $P_v$, further reducing future
	weight. Thus the honest-minus-adversarial gap is non-decreasing across windows.
\end{proof}

\subsection{Safety}
\label{sec:safety}

We present safety for (i) commitment-weighted longest-chain selection and (ii)
optional weighted BFT finality. For the longest-chain mode, we phrase the
guarantee in terms of the common-prefix property.

Let the per-epoch probability that the elected leader is honest be
\[
p_H(t)=\frac{W_H(t)}{W_{\mathrm{tot}}(t)}.
\]
Under Definition~\ref{def:whm}, $p_H(t) > 1/2$ for all epochs.

\begin{theorem}[Common-prefix (weighted longest-chain)]
	\label{thm:cp}
	Assume partial synchrony after GST and weighted honest majority:
	$W_A(t) < \tfrac{1}{2}W_{\mathrm{tot}}(t)$ for all $t$.
	Then for any confirmation depth $k$, the probability that two honest validators
	adopt chains that disagree on a block that is $k$-deep is negligible in $k$.
\end{theorem}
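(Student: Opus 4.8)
The plan is to reduce the weighted longest-chain common-prefix property to the standard Nakamoto-style backbone argument (following Garay--Kiayias--Leonardos and Pass--Seeman--Shelat), with commitment-weighted leader election playing the role of the block-production process. The single structural fact I would exploit is that, by the weighted honest-majority assumption, the per-epoch probability that the elected leader is honest satisfies $p_H(t) = W_H(t)/W_{\mathrm{tot}}(t) \ge 1-\rho > 1/2$ uniformly in $t$, regardless of how the adversary manipulates the history-dependent commitment scores $CS_v(t)$ through its online/offline schedule or slashing. Thus honest leaders are produced at a strictly higher rate than adversarial ones at every epoch, and this uniform bias is the engine of the whole argument.

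First I would formalize the block-production process. In each epoch $t$ the VRF lottery produces a (possibly empty) set of eligible leaders, with $\Theta$ calibrated so that the expected number of eligible leaders per epoch is close to one (Section~\ref{sec:leader-election}); conditioned on at least one eligible leader, each such leader is honest independently with instantaneous probability $p_H(t)$. I would introduce indicators $\mathbf{1}^H_t$ and $\mathbf{1}^A_t$ for honest resp.\ adversarial block production in epoch $t$ and — crucially for partial synchrony — the notion of a \emph{convergence opportunity}: an epoch in which a \emph{unique} honest leader produces a block that, together with all earlier honest blocks, reaches every honest validator before the next honest block is proposed. After GST messages are delivered within a bounded delay $\Delta$, so convergence opportunities are exactly those honest successes that are $\Delta$-isolated from neighbouring honest successes, and the near-one leader calibration guarantees a constant fraction of honest successes are of this form.

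The core of the argument is then a race between the count $Y_k$ of honest convergence opportunities and the count $Z_k$ of adversarial blocks over any interval of epochs spanning $k$ confirmed blocks. Writing $p_A(t) = W_A(t)/W_{\mathrm{tot}}(t) \le \rho$, the conditional drift of $Y_k - Z_k$ is at least $p_H(t) - p_A(t) \ge 1 - 2\rho > 0$ at every epoch, so $\mathbb{E}[Y_k - Z_k] = \Omega(k)$ after $\Delta$-discounting. A standard forking argument shows that if two honest validators disagreed on a $k$-deep block, the adversary must have sustained a private chain matching or exceeding the honest convergence opportunities over that interval, i.e.\ $Z_k \ge Y_k$; bounding the probability of this event yields the negligible-in-$k$ conclusion.

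The main obstacle is concentration under adaptivity: because the thresholds $\tau_v(t)$ depend on the time-varying scores $CS_v(t)$, the per-epoch successes are \emph{not} i.i.d.\ but merely an adapted sequence, so a plain Chernoff bound does not apply. The right tool is a martingale concentration inequality (Azuma--Hoeffding) applied to the compensated process $Y_k - Z_k$ minus its conditional drift, which requires only that the \emph{conditional} honest bias be bounded below by the uniform constant $1-2\rho>0$ — precisely what Definition~\ref{def:whm} supplies at every epoch. A secondary technical point is the $\Delta$-discounting: under partial synchrony the raw inequality $p_H>1/2$ must be strengthened to a $\Delta$-adjusted honest advantage, so I would either tune $\Theta$ so that the expected honest-success rate times $\Delta$ is small, or fold the delay penalty directly into $\rho$. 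Finally, I would remark that the optional BFT-finality mode admits a separate and simpler safety proof via quorum intersection (any two $>2/3$-commitment-weight quorums overlap in honest weight), which I would state but keep distinct from the longest-chain race analysis above.
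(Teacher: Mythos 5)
Your proposal follows essentially the same route as the paper's proof, which is only a brief sketch invoking the standard backbone intuition: under weighted honest majority $p_H(t)>1/2$, the honest chain outgrows any adversarial fork in expectation, and a $k$-deep disagreement requires an exponentially unlikely run of adversarial leader elections. Your version is in fact more careful than the paper's own argument --- in particular, the martingale (Azuma--Hoeffding) treatment of the adapted, non-i.i.d.\ success sequence induced by history-dependent scores $CS_v(t)$, and the $\Delta$-discounted convergence-opportunity accounting under partial synchrony, address technical points that the paper's sketch leaves implicit.
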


\begin{proof}[Proof sketch]
	This follows the standard backbone intuition adapted to commitment-weighted leader
	election: when $p_H(t)>1/2$, the honest chain grows faster in expectation than any
	adversarial fork. A $k$-deep disagreement requires the adversary to sustain a
	competing fork over many decisive epochs, which entails an atypically long run of
	adversarial leader elections. The probability of such an event decays exponentially
	in $k$. Moreover, equivocation is slashable, reducing adversarial future weight and
	therefore strengthening the drift effect over time (Lemma~\ref{lem:drift}).
\end{proof}

\paragraph{Optional weighted BFT finality.}
If PoCmt is paired with a HotStuff-style finality gadget where votes are weighted
by $CS_v(t)$, safety becomes deterministic under the usual quorum intersection
condition.

\begin{theorem}[Safety with weighted BFT finality]
	\label{thm:safety-bft}
	If a block is finalized only after collecting $>2/3$ of total commitment weight
	in votes, and $W_A(t) < \tfrac{1}{3}W_{\mathrm{tot}}(t)$ for all $t$, then two
	conflicting finalized blocks cannot exist.
\end{theorem}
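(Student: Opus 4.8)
The plan is to prove Theorem~\ref{thm:safety-bft} by a standard quorum-intersection argument, now carried out in the weighted setting where vote mass is measured by commitment score $CS_v(t)$ rather than by identity count. The statement is the weighted analogue of the classical PBFT/HotStuff safety bound, so the skeleton is familiar; the only genuine adaptation is to replace ``number of votes'' with ``commitment weight'' throughout and to argue that the adversary cannot inflate its effective weight.

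First I would suppose, for contradiction, that two conflicting blocks $B$ and $B'$ are both finalized. By the finality rule each must have collected vote weight exceeding $\tfrac{2}{3}W_{\mathrm{tot}}(t)$; let $Q$ and $Q'$ denote the sets of validators whose votes contributed to $B$ and $B'$ respectively. The key quantitative step is to bound the commitment weight of the overlap $Q \cap Q'$ from below via inclusion–exclusion:
\[
W(Q\cap Q') \;=\; W(Q)+W(Q')-W(Q\cup Q') \;>\; \tfrac{2}{3}W_{\mathrm{tot}}+\tfrac{2}{3}W_{\mathrm{tot}}-W_{\mathrm{tot}} \;=\; \tfrac{1}{3}W_{\mathrm{tot}},
\]
so the overlapping validators carry strictly more than one-third of the total commitment weight. (One must take a little care because $Q$ and $Q'$ may be collected at slightly different epochs; I would either fix the total weight over the relevant finalization window or invoke the drift invariant of Lemma~\ref{lem:drift} to show $W_{\mathrm{tot}}$ varies slowly enough that the one-third bound is preserved up to negligible error.) Next I would observe that any validator in $Q\cap Q'$ voted for two conflicting blocks, i.e.\ equivocated, so the overlap consists entirely of misbehaving validators and hence lies within the adversarial set $A$. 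This forces $W_A(t) \ge W(Q\cap Q') > \tfrac{1}{3}W_{\mathrm{tot}}(t)$, directly contradicting the hypothesis $W_A(t) < \tfrac{1}{3}W_{\mathrm{tot}}(t)$. Therefore no two conflicting finalized blocks can exist.

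The main obstacle is not the counting argument itself---that is routine---but ensuring that commitment \emph{weight} behaves like a conserved quantity the way vote \emph{count} does in the unweighted proof. Specifically, I must rule out the possibility that an equivocating validator has \emph{different} commitment scores in the two quorums (because $CS_v$ is time-varying across epochs), which could let the adversary double-count or undercount its own weight to evade the bound. I expect to resolve this by anchoring all three quorum-weight quantities to a common reference total $W_{\mathrm{tot}}$ over the finalization window, using the fact that $H_v$ is piecewise constant within a human window (Section~\ref{sec:timescales}) and that $P_v,U_v$ drift only slowly per epoch, so that the weighted one-third bound degrades by at most a negligible additive term. A complementary point worth stating is that slashing only \emph{reduces} a misbehaving validator's future weight and never retroactively erases the weight it carried at vote time, so the inclusion–exclusion bound on $W(Q\cap Q')$ remains valid at the moment of finalization; this makes the deterministic safety conclusion clean and independent of the probabilistic chain-growth reasoning used in Theorem~\ref{thm:cp}.
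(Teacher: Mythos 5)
Your proposal is correct and follows essentially the same quorum-intersection argument as the paper: two $>2/3$-weight quorums must overlap in more than $1/3$ of the total weight, and under $W_A(t)<\tfrac{1}{3}W_{\mathrm{tot}}(t)$ that overlap cannot consist solely of adversarial validators, forcing an honest equivocation that cannot occur. Your additional care about anchoring the time-varying commitment weights to a common reference total is a reasonable refinement of a subtlety the paper's sketch leaves implicit, but it does not change the underlying approach.
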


\begin{proof}[Proof sketch]
	Any two quorums each exceeding $2/3$ of the total weight intersect in more than
	$1/3$ of the total weight. Since the adversary controls strictly less than $1/3$,
	at least one honest validator would need to vote for both conflicting blocks,
	contradicting deterministic honest behavior.
\end{proof}

\subsection{Liveness and Expected Leader Delay}
\label{sec:liveness}

\begin{theorem}[Liveness]
	\label{thm:liveness}
	Assume partial synchrony with unknown GST and that at least one honest validator
	remains online after GST. Then the chain grows without bound and honest blocks
	are eventually confirmed (or finalized, if the BFT gadget is enabled).
\end{theorem}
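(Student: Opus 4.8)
The plan is to decompose liveness into two claims: (i) \emph{chain growth}---the honest chain gains at least one block within a bounded number of epochs after GST with overwhelming probability---and (ii) \emph{confirmation}---each honest block eventually becomes permanent, either by reaching confirmation depth $k$ (longest-chain mode) or by collecting a finalizing quorum (BFT mode). The engine behind both is a constant lower bound on the per-epoch probability that an honest validator is elected and that its block is adopted, which I would derive from the commitment-weighted VRF lottery of Section~\ref{sec:leader-election} together with post-GST message delivery.

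First I would lower-bound the probability of an honest-led epoch. In the lottery each validator $v$ is eligible independently with probability $\tau_v(t)=\Theta\,CS_v(t)/\sum_u CS_u(t)$, so the expected number of eligible honest validators is $\sum_{v\in H}\tau_v(t)=\Theta\,p_H(t)$ with $p_H(t)=W_H(t)/W_{\mathrm{tot}}(t)$. Since honest identities remain online and keep accruing $P_v$ and $U_v$ while solving HCO challenges at window boundaries, and since the drift invariant (Lemma~\ref{lem:drift}) keeps the honest-minus-adversarial weight gap non-decreasing across windows, weighted honest majority (Definition~\ref{def:whm}) gives $p_H(t)>1/2$ \emph{uniformly in $t$}. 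Hence the probability that at least one honest validator is eligible is $1-\prod_{v\in H}(1-\tau_v(t))\ge 1-e^{-\Theta p_H(t)}\ge 1-e^{-\Theta/2}=:q>0$, a constant independent of the epoch. The expected number of epochs until an honest leader appears is therefore at most $1/q=O(1)$, giving the bounded expected leader delay referenced in the section title.

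Next I would convert honest-led epochs into chain progress. After GST, any block proposed by an eligible honest leader is delivered to all honest validators within the synchrony bound $\Delta$, verifies (valid VRF proof, signature, and fork-choice consistency), and is adopted under the weighted longest-chain rule because extending the heaviest honest chain strictly increases $W(\mathrm{chain})$. Thus each honest-led epoch grows the honest chain by at least one block, and over $n$ post-GST epochs the number of such epochs is $\Omega(qn)$ with probability approaching one by a standard concentration argument; a Borel--Cantelli step then yields unbounded growth almost surely. For confirmation in longest-chain mode I would invoke common-prefix (Theorem~\ref{thm:cp}): once a block is $k$-deep it is permanent, and unbounded growth guarantees that every honest block eventually reaches depth $k$. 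In BFT mode, after GST the honest leader's proposal reaches all honest validators, whose aggregate weight exceeds $2/3$ of $W_{\mathrm{tot}}(t)$ by the stronger majority assumption of Theorem~\ref{thm:safety-bft}; they thus form a finalizing quorum within a bounded number of views once view-synchronization is achieved after GST, giving deterministic finality.

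The main obstacle is the two-timescale coupling. Within a human window $H_v$ is frozen, so leader-election weights are piecewise constant, but across window boundaries honest engagement rises while dormant or rotated adversarial identities suffer availability decay and, upon equivocation, multiplicative slashing. The delicate point is therefore to show that the honest-leader probability $q$ is bounded below by the \emph{same} constant for \emph{all} epochs and all windows, rather than merely at the initial window; this is exactly where I would lean on Lemma~\ref{lem:drift} to propagate $p_H(t)>1/2$ through every window boundary, and on the assumption that at least one honest validator stays online to guarantee $W_H(t)>0$ and positive eligibility even under adversarial rotation. A secondary subtlety is making the post-GST view-synchronization argument for the optional BFT gadget precise without re-deriving HotStuff liveness from scratch; I would state it as an invocation of the standard pacemaker guarantee specialized to commitment-weighted quorums.
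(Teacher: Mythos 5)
Your proposal takes essentially the same route as the paper's proof: a uniform positive lower bound $p_H(t)\ge p_{\min}>0$ on the per-epoch honest-leader probability, propagated across window boundaries via the drift invariant (Lemma~\ref{lem:drift}), combined with post-GST bounded-delay delivery so that each honest-led epoch extends the chain, yielding unbounded growth and eventual confirmation or finality. The only differences are cosmetic: the paper needs only $p_{\min}>0$ rather than the weighted-majority bound $p_H(t)>1/2$ you invoke, and it leaves implicit the concentration/Borel--Cantelli and common-prefix steps that you spell out.
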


\begin{proof}[Proof sketch]
	After GST, honest messages propagate within bounded delay. Under the drift
	invariant (Lemma~\ref{lem:drift}), honest weight does not lose ground to the
	adversary across windows, and in typical executions it increases relative to
	adversarial weight. Hence there exists $p_{\min}>0$ such that $p_H(t)\ge p_{\min}$
	for all sufficiently large $t$. Therefore, an honest leader appears within any
	$k$ consecutive epochs with probability $1-(1-p_{\min})^k$, which converges to $1$
	exponentially fast. Each honest-led proposal extends the chain and is adopted by
	honest validators under fork choice, implying unbounded growth and eventual
	confirmation/finality.
\end{proof}

\paragraph{Expected delay to an honest leader.}
If $p_H(t)\ge p_{\min}$, then the waiting time (in epochs) to the next honest
leader is geometric with
\[
\mathbb{E}[T_{\mathrm{honest}}] = \frac{1}{p_{\min}}.
\]

\subsection{Long-Range and Zombie-Validator Robustness}
\label{sec:longrange}

Long-range attacks and ``zombie'' validators exploit identities that were once
powerful but later became inactive. PoCmt mitigates this via time-dependent
scoring together with availability decay and (optional) protocol slashing.

\begin{lemma}[Inactivity implies loss of effective influence]
	\label{lem:long-range}
	If a validator is offline for $k$ consecutive epochs, then its availability
	decays exponentially:
	\[
	U_v(t{+}k) \le U_v(t)e^{-\lambda k},
	\]
	and hence its availability contribution to the commitment score satisfies
	$\gamma U_v(t{+}k)\le \gamma U_v(t)e^{-\lambda k}$.
\end{lemma}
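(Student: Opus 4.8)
The plan is to unfold the availability recurrence from Section~\ref{sec:dynamics} directly. Recall that at each consensus epoch the availability component evolves by $U_v(t{+}1) = U_v(t)e^{-\lambda}$ whenever $v$ is offline at epoch $t$. Since the hypothesis is that $v$ is offline for $k$ \emph{consecutive} epochs starting at $t$, the offline branch of the update rule fires at each of those epochs, so the update is a pure multiplication by $e^{-\lambda}$ at every step and no intervening online boost $+\kappa_u$ can occur.

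I would argue by induction on $k$. The base case $k=0$ is the trivial identity $U_v(t) = U_v(t)$. For the inductive step, assuming $U_v(t{+}j) = U_v(t)e^{-\lambda j}$ for some $j<k$, the fact that $v$ is offline at epoch $t{+}j$ lets me apply the offline branch once more to obtain
\[
U_v(t{+}j{+}1) = U_v(t{+}j)e^{-\lambda} = U_v(t)e^{-\lambda(j+1)}.
\]
After $k$ steps this yields the exact identity $U_v(t{+}k) = U_v(t)e^{-\lambda k}$, which in particular gives the stated inequality $U_v(t{+}k) \le U_v(t)e^{-\lambda k}$. The availability-contribution bound then follows by multiplying through by the fixed nonnegative weight $\gamma$: since $\gamma \ge 0$ preserves the relation, $\gamma U_v(t{+}k) \le \gamma U_v(t)e^{-\lambda k}$, which is the second claim.

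There is essentially no analytic obstacle here: the statement is an immediate consequence of the multiplicative offline-decay rule, and the only point requiring care is to confirm that the hypothesis genuinely forces the offline branch at every one of the $k$ epochs. I would emphasize that the argument in fact delivers \emph{equality}, so the inequality form is strictly weaker than what is proved, and that the ``consecutive'' hypothesis is essential: were online epochs interleaved, the additive boost terms could increase $U_v$ and the bound would fail. I would also flag, without proving it here, that this geometric decay of $\gamma U_v$ is precisely the mechanism driving down $CS_v$ for dormant or rotating identities, which is the role this lemma plays in the long-range and zombie-validator analysis of Section~\ref{sec:longrange}.
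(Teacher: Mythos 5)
Your proposal is correct and matches the paper's treatment: the paper states Lemma~\ref{lem:long-range} without a separate proof, treating it as an immediate consequence of iterating the offline branch $U_v(t{+}1)=U_v(t)e^{-\lambda}$ of the availability update, which is exactly the induction you carry out. Your observations that the argument actually yields equality and that the ``consecutive'' hypothesis is what rules out intervening $+\kappa_u$ boosts are both accurate and consistent with the dynamics in Section~\ref{sec:dynamics}.
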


\paragraph{Discussion (zombie resistance).}
Lemma~\ref{lem:long-range} formalizes that inactive identities cannot retain
availability-derived influence and therefore cannot remain competitive indefinitely
without returning online. In addition, inactive validators do not accrue
participation $P_v$ and may lose it through slashing if they attempt equivocation.
If a deployment desires stronger long-range suppression, PoCmt can optionally add
explicit aging/decay to stale engagement $H_v$ (a parameter choice orthogonal to
our cost-theoretic separation), which only strengthens the conclusion.

\subsection{Fairness of Leader Election}
\label{sec:fairness}

PoCmt samples leaders proportionally to commitment score, yielding long-run
fairness among honest validators.

\begin{lemma}[Leader-election fairness]
	\label{lem:fairness}
	If two honest validators have asymptotically identical commitment-score
	trajectories, then the fraction of epochs in which each is elected leader
	converges to their commitment-proportional probability:
	\[
	\lim_{T\to\infty}
	\frac{\#\{\text{epochs where } v \text{ leads}\}}{T}
	=
	\lim_{T\to\infty}
	\frac{CS_v(T)}{\sum_u CS_u(T)}.
	\]
\end{lemma}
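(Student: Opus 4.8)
The plan is to prove Lemma~\ref{lem:fairness} as a strong law of large numbers for the leader-election process, conditioned on the asymptotic equivalence of the two commitment-score trajectories. First I would set up the per-epoch eligibility process precisely. From Section~\ref{sec:leader-election}, validator $v$ is eligible in epoch $t$ when $r_v(t)<\tau_v(t)$ with $\tau_v(t)=\Theta\cdot CS_v(t)/\sum_{u\in V(t)}CS_u(t)$, and the VRF output $r_v(t)$ is (by the idealized randomness assumption) uniform on $[0,1)$ and independent across epochs given the public randomness $\mathsf{rand}(t)$. Hence the indicator $\mathbf{1}[v\text{ leads at }t]$ is a Bernoulli random variable with success probability exactly $\tau_v(t)$ (in the single-leader regime where $\Theta$ normalizes expected eligibility to one; I would either work in the committee-free regime or restrict attention to the event that $v$ is the unique eligible proposer, absorbing tie-breaking into lower-order terms). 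The key modeling point to state explicitly is that, conditioned on the commitment scores, these indicators are independent across epochs.

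Next I would apply a concentration argument to the partial sums. Let $N_v(T)=\#\{t\le T:v\text{ leads}\}=\sum_{t=1}^{T}\mathbf{1}[v\text{ leads at }t]$. Its conditional mean is $\sum_{t=1}^{T}\tau_v(t)$, proportional to $\sum_{t=1}^{T}CS_v(t)/\sum_uCS_u(t)$. Because the increments are bounded (each indicator lies in $\{0,1\}$) and conditionally independent, I would invoke a martingale/Azuma or Borel--Cantelli-type argument to show that $N_v(T)/T$ converges almost surely to $\lim_{T\to\infty}\tfrac{1}{T}\sum_{t=1}^{T}\tau_v(t)$, i.e.\ the time-average of the eligibility probability equals the empirical leader frequency in the limit. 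The same holds for the second validator $w$.

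The final step is to pass from the Ces\`aro average of $\tau_v(t)$ to the claimed limit $\lim_{T}CS_v(T)/\sum_uCS_u(T)$. Here I would use the hypothesis that the two validators have \emph{asymptotically identical} trajectories: by the commitment dynamics of Section~\ref{sec:dynamics}, $CS_v(t)$ and the normalized share $CS_v(t)/\sum_uCS_u(t)$ converge (the $H$-component is piecewise constant across windows, while $P$ and $U$ evolve smoothly), so the running time-average of a convergent sequence equals its limit. This reduces the statement to the elementary fact that if $a_t\to L$ then $\tfrac{1}{T}\sum_{t\le T}a_t\to L$. Combining the almost-sure convergence of the leader frequency to the Ces\`aro average with this Ces\`aro-to-limit identification yields the stated equality, and the ``identical trajectories'' hypothesis gives that $v$ and $w$ share the same limiting share.

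The main obstacle I anticipate is the independence/stationarity assumption: commitment scores $CS_v(t)$ are themselves random and path-dependent (they depend on past availability, participation, and HCO outcomes), so the eligibility indicators are \emph{not} unconditionally independent, and the denominator $\sum_uCS_u(t)$ couples all validators. The clean route is to condition on the entire commitment-weight filtration and prove convergence of $N_v(T)/T$ on that conditioned process via a martingale strong law, then argue that the limiting normalized share is almost surely well-defined under the trajectory-convergence hypothesis. I would flag that the lemma implicitly assumes the denominator stabilizes (or grows regularly enough that the normalized share converges); making this precise---rather than assuming convergence of $CS_v(T)/\sum_uCS_u(T)$ outright---is the delicate part, and I would either take that convergence as a standing regularity hypothesis justified by the deterministic dynamics of Section~\ref{sec:dynamics} or prove it separately from the boost/decay recurrences.
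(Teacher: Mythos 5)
Your proposal takes essentially the same route as the paper's own (two-sentence) proof sketch, which simply invokes independent VRF trials with success probability proportional to $CS_v(t)$ followed by a law-of-large-numbers concentration argument. You go considerably further in rigor---correctly flagging that the indicators are only \emph{conditionally} independent given the commitment filtration, and that the passage from the Ces\`aro average of $\tau_v(t)$ to the limit $\lim_T CS_v(T)/\sum_u CS_u(T)$ requires the normalized share to stabilize---but these are refinements of, not departures from, the paper's argument.
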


\begin{proof}[Proof sketch]
	Leader election is implemented via independent VRF trials with success
	probability proportional to $CS_v(t)$ (Section~\ref{sec:leader-election}).
	Standard concentration (law of large numbers) implies empirical leader
	frequencies converge to these probabilities over long horizons.
\end{proof}

\paragraph{Remark (outsourcing and coordinated humans).}
As discussed in Section~\ref{sec:cost}, PoCmt does not preclude an adversary from
hiring many humans or coordinating human groups. These behaviors do not violate
safety or liveness; they simply increase $W_A(t)$ by increasing the available
human-time capacity. The security statements above are therefore best interpreted
as \emph{weight-relative}: as long as a weighted honest majority holds (Definition~\ref{def:whm}),
PoCmt enjoys the same backbone-style guarantees as other leader-based protocols,
while differing fundamentally in how adversarial weight can be scaled.


\section{Simulation Framework and Evaluation}
\label{sec:simulation}

We complement our analytical results with a lightweight yet expressive
simulation of PoCmt. The purpose of the simulation is not performance
benchmarking, but \emph{mechanism validation}: we empirically demonstrate that
(i) commitment exhibits a drift advantage toward honest validators,
(ii) adversarial influence is fundamentally limited by \emph{human-time capacity}
rather than capital or identity count, and (iii) commitment-weighted leader
election is fair in the long run.

Concretely, the experiments are designed to mirror three analytical claims:
(i) the \emph{drift invariant} (Lemma~\ref{lem:drift}),
(ii) the \emph{linear Sybil-cost} barrier induced by human-time
(Theorem~\ref{thm:linear-sybil}), and
(iii) \emph{leader-election fairness} under commitment-proportional sampling
(Lemma~\ref{lem:fairness}).

The simulator is intentionally minimalistic and self-contained. All design
choices correspond directly to the formal model of Section~\ref{sec:model} and
the Human Challenge Oracle assumptions of Section~\ref{sec:HCO}, allowing clear
interpretation of outcomes.

\subsection{Simulation Model and Two-Timescale Time Structure}

For simplicity, each simulation epoch corresponds to a single human window.
This preserves the rate-limiting effect of human engagement while avoiding
an additional nested time loop. All analytical guarantees are window-based
and therefore directly reflected by the simulated epochs.

The simulation follows the two-timescale structure of PoCmt
(Section~\ref{sec:time-model}). Time is discretized into consensus epochs
$t=0,1,2,\dots$, which drive leader election and block production. In parallel,
epochs are grouped into coarser \emph{human windows} indexed by $d=0,1,2,\dots$.
Within each window, validators may solve a small number of HCO challenges, and
the engagement component $H_v$ is updated at the window boundary.

Each validator $v$ maintains a commitment state
\[
S_v(t)=(H_v(t),P_v(t),U_v(t)),
\]
capturing cumulative human engagement, protocol participation, and availability.
The commitment score used for leader election is
\[
CS_v(t)=\alpha H_v(t)+\beta P_v(t)+\gamma U_v(t),
\]
with fixed weights $(\alpha,\beta,\gamma)$.

We initialize $|H|$ honest validators and $s$ adversarial Sybil identities, all
starting from zero commitment. Honest validators solve their HCO challenges with
high probability, remain online with high probability, and follow the protocol.
Adversarial identities are controlled by a centralized adversary subject to an
explicit human-time capacity constraint.

\subsection{Adversarial Constraints and Human-Time Capacity}

The key modeling choice is that adversarial effort is limited by \emph{human-time
	capacity} rather than by capital or identity count. We model a parameter $m$,
representing the maximum number of HCO challenges the adversary can solve per
\emph{human window}. This captures the HCO non-parallelizability assumption:
even if the adversary controls many identities, the only way to increase
aggregate engagement is to increase the number of human-solves per window.

The adversary may allocate these $m$ solved challenges across Sybil identities
(e.g., concentrate on a few, or rotate among many), but cannot exceed $m$ total
solutions per window. This corresponds directly to the linear-cost bound in
Section~\ref{sec:linear-cost} and the asymptotic requirement in
Theorem~\ref{thm:linear-sybil}.

\subsection{Parameters}

Table~\ref{tab:sim-params} summarizes the default simulation parameters. Values
are chosen to illustrate qualitative effects rather than to optimize throughput
or latency.

\begin{table}[t]
	\centering
	\small
	\begin{tabular}{l c}
		\hline
		\textbf{Parameter} & \textbf{Value} \\
		\hline
		Honest validators $|H|$ & $50$ \\
		Sybil identities $s$ & $100$ \\
		Epoch horizon $T$ & $3000$ epochs \\
		Commitment weights $(\alpha,\beta,\gamma)$ & $(1.0,\,0.5,\,0.1)$ \\
		Human engagement boost $\kappa_h$ & $1.0$ (per solved challenge) \\
		Participation boost $\kappa_p$ & $0.5$ (per epoch) \\
		Availability boost $\kappa_u$ & $0.2$ (per online epoch) \\
		Availability decay rate $\lambda$ & $0.05$ (per offline epoch) \\
		Slashing multiplier $\delta$ & $0.1$ \\
		Honest solve probability & $0.98$ \\
		Honest online probability & $0.995$ \\
		Adversarial human-time capacity $m$ & varies (per window) \\
		\hline
	\end{tabular}
	\caption{Simulation parameters. Values are illustrative and chosen to highlight
		qualitative dynamics of PoCmt rather than performance tuning.}
	\label{tab:sim-params}
\end{table}

\subsection{Metrics}

From the simulation trace we derive the following metrics:

\begin{itemize}
	\item \textbf{Commitment trajectories:}
	total honest and adversarial commitment weights
	$W_H(t)=\sum_{v\in H} CS_v(t)$ and $W_A(t)=\sum_{a\in A} CS_a(t)$.
	
	\item \textbf{Leader share:}
	the fraction of epochs in which the elected leader is adversarial.
	
	\item \textbf{Commitment weight share:}
	the ratio $W_A(T)/(W_H(T)+W_A(T))$ at the end of the horizon.
	
	\item \textbf{Fairness:}
	the deviation between empirical leader frequencies and ideal
	commitment-proportional probabilities for honest validators.
\end{itemize}

These metrics correspond directly to the analytical claims of
Sections~\ref{sec:security} and~\ref{sec:cost}.

\subsection{Results}

\paragraph{Commitment drift.}
Figure~\ref{fig:commitment-trajectories} plots total honest and adversarial
commitment over time. Honest commitment grows steadily with a larger slope,
while adversarial commitment grows more slowly because engagement can only be
increased via a bounded number of solved HCO challenges per window.
The gap $W_H(t)-W_A(t)$ is non-decreasing, empirically validating the drift
invariant (Lemma~\ref{lem:drift}).

\begin{figure}[t]
	\centering
	\includegraphics[width=0.85\linewidth]{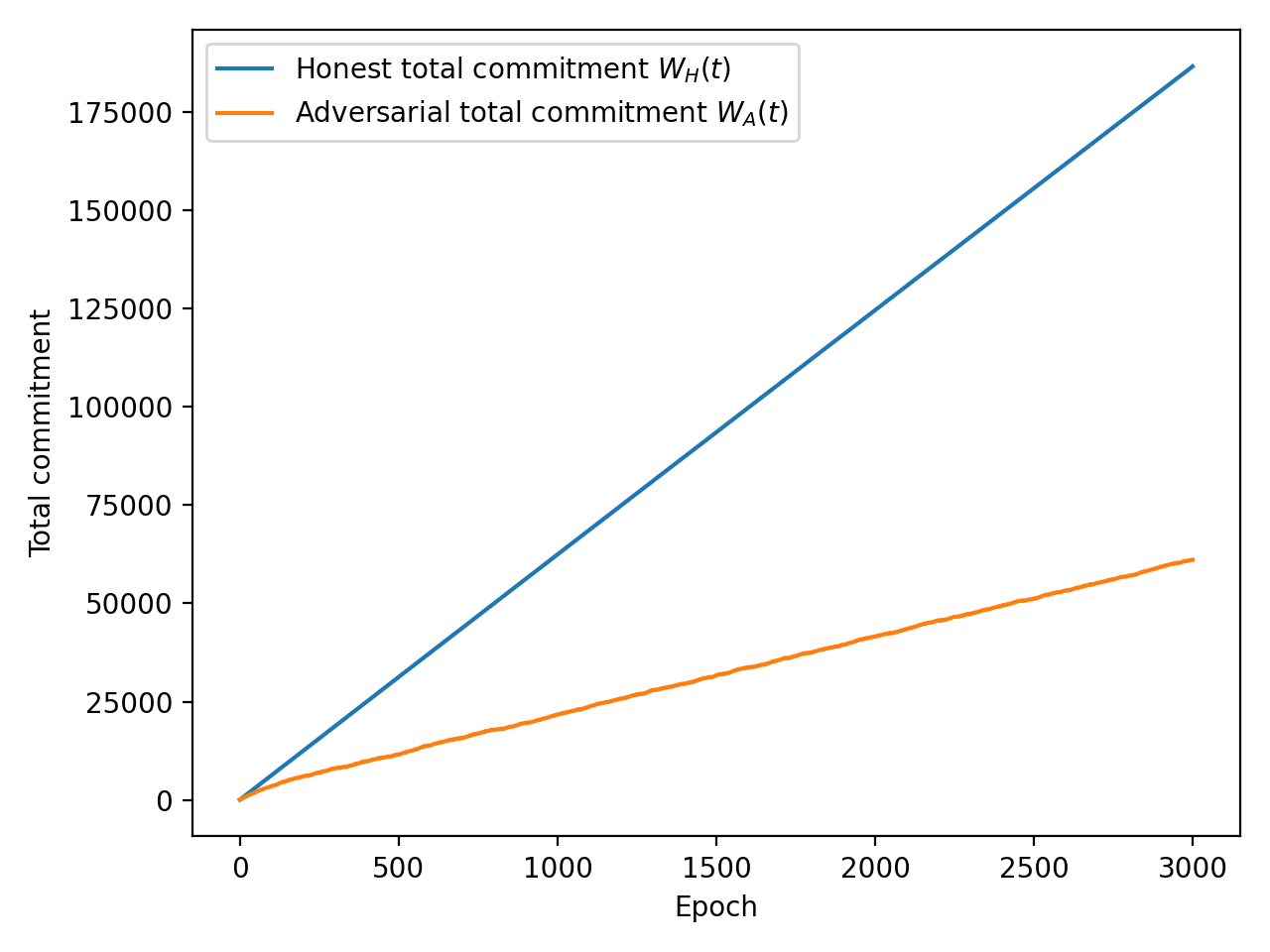}
	\caption{\textbf{Commitment drift under human-time scarcity.}
		Total honest and adversarial commitment weights,
		$W_H(t)=\sum_{v\in H}CS_v(t)$ and $W_A(t)=\sum_{a\in A}CS_a(t)$, over time.
		Honest commitment grows approximately linearly since honest validators solve
		challenges and remain online with high probability, while the adversary is
		constrained by a fixed human-time capacity $m$ (maximum number of solved HCO
		challenges per window). The persistent gap $W_H(t)-W_A(t)$ empirically
		illustrates the drift behavior predicted by Lemma~\ref{lem:drift}.}
	\label{fig:commitment-trajectories}
\end{figure}

\paragraph{Human-time capacity sweep.}
Figures~\ref{fig:capacity-leader} and~\ref{fig:capacity-weight} vary the
adversary’s human-time capacity $m$ while keeping all other parameters fixed.
As $m$ increases, adversarial influence grows smoothly; however, achieving
majority influence requires $m$ itself to be large. Increasing the number of
Sybil identities or reallocating effort alone does not bypass this constraint.

The capacity sweep isolates the core non-parallelizability mechanism: even if
the adversary controls many identities, the only way to increase aggregate
engagement is to increase the number of human-solves per window. This corresponds
directly to the bound in Section~\ref{sec:linear-cost} and the asymptotic
requirement in Theorem~\ref{thm:linear-sybil}.

\begin{figure}[t]
	\centering
	\includegraphics[width=0.85\linewidth]{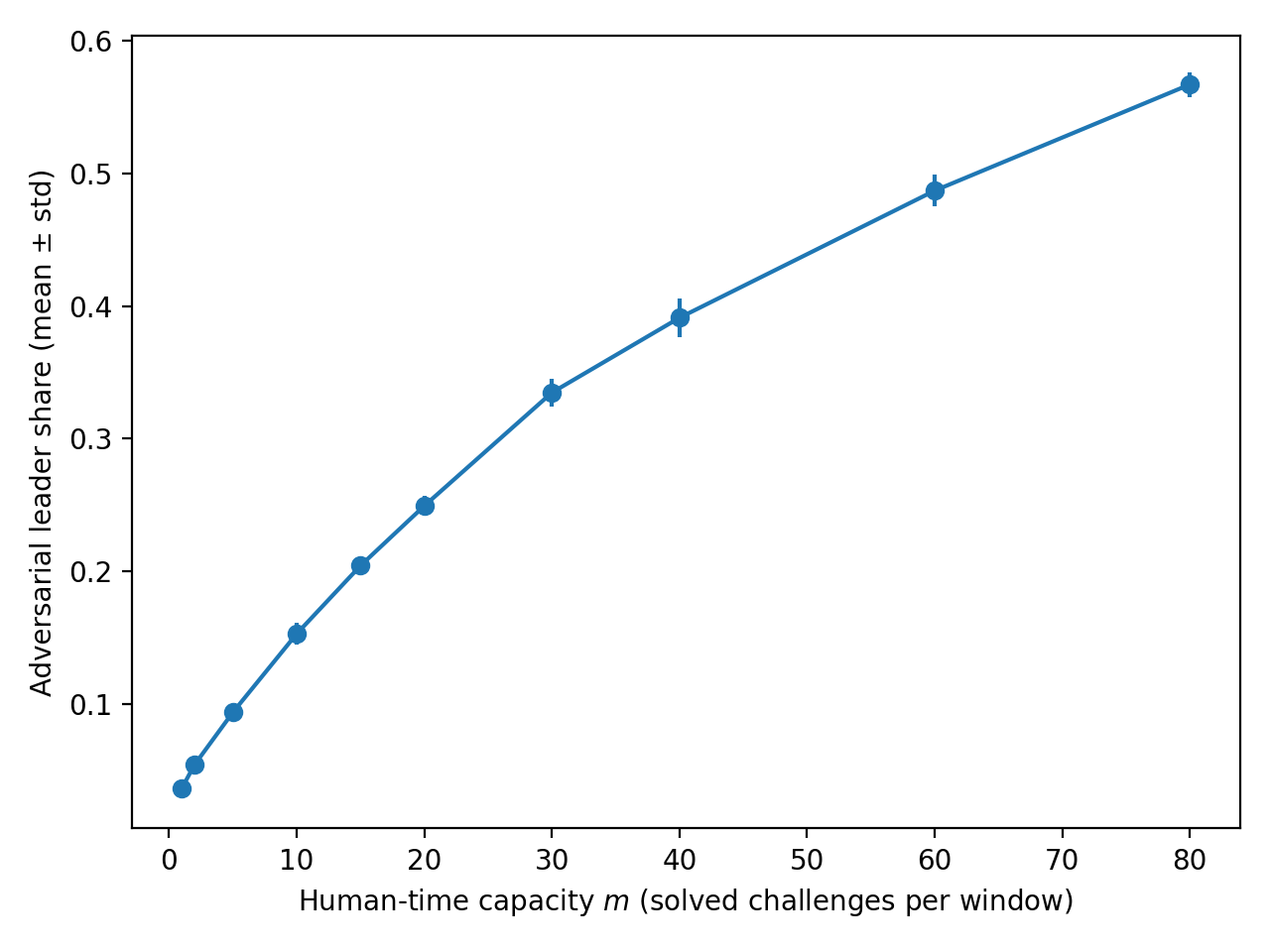}
	\caption{\textbf{Adversarial leader share is controlled by human-time capacity.}
		Mean adversarial leader fraction (with standard deviation across seeds) as a
		function of adversarial human-time capacity $m$ (solved challenges per window).
		Increasing $m$ increases adversarial leader share smoothly; however, achieving
		majority leadership requires $m$ itself to be large. This operationalizes the
		non-parallelizable resource claim: capital or identity replication alone cannot
		push leader share beyond what human-time permits (Theorem~\ref{thm:linear-sybil}).}
	\label{fig:capacity-leader}
\end{figure}

\begin{figure}[t]
	\centering
	\includegraphics[width=0.85\linewidth]{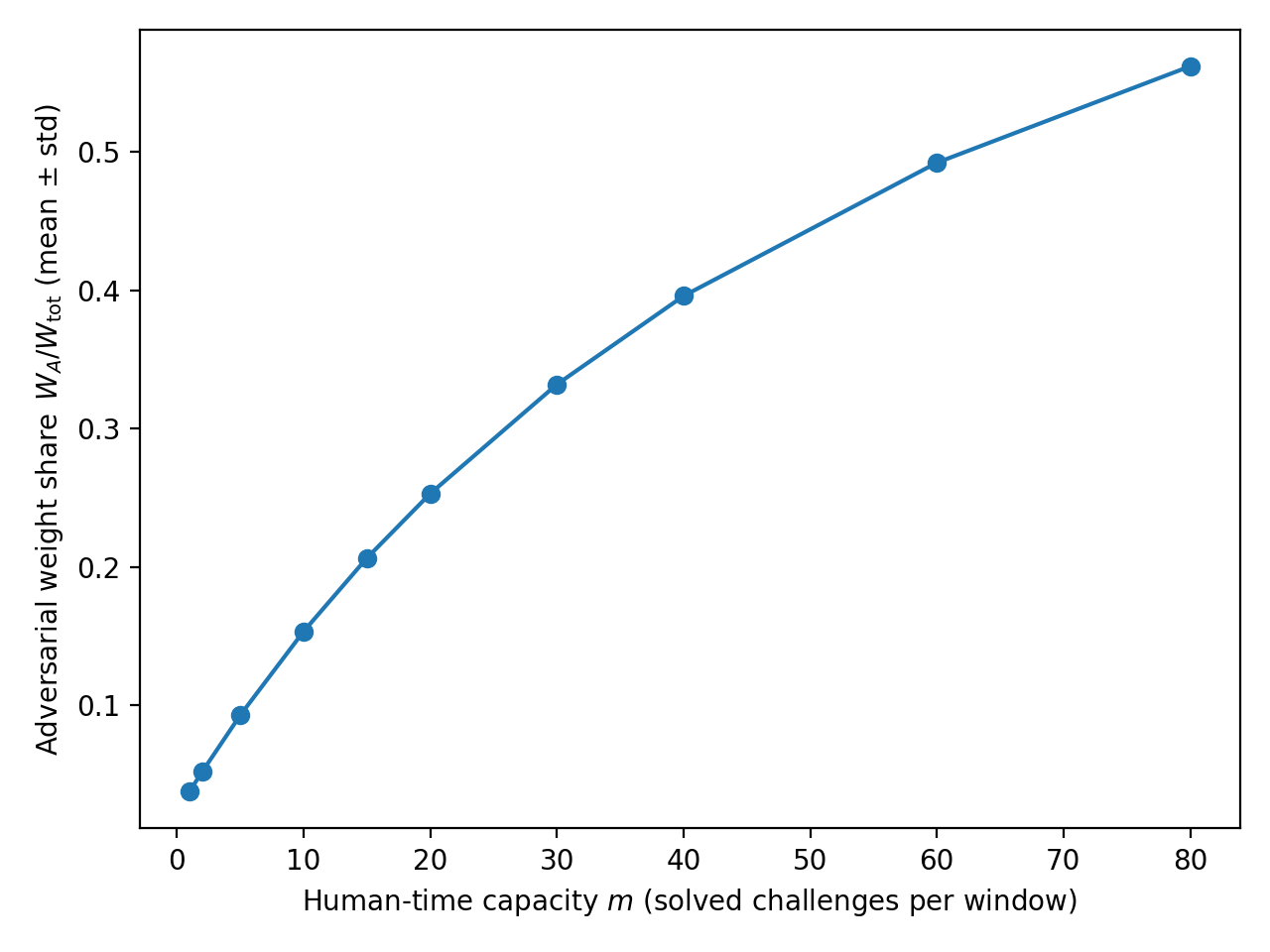}
	\caption{\textbf{Weight share tracks leader share under commitment-proportional sampling.}
		Final adversarial commitment share $W_A(T)/(W_H(T)+W_A(T))$ versus human-time
		capacity $m$. The close agreement between weight share (this figure) and leader
		share (Figure~\ref{fig:capacity-leader}) confirms that the implemented leader
		sampling matches the ideal commitment-proportional rule in
		Section~\ref{sec:leader-election}, and supports the fairness statement of
		Lemma~\ref{lem:fairness}.}
	\label{fig:capacity-weight}
\end{figure}

\paragraph{Fairness of leader election.}
Figure~\ref{fig:fairness} compares empirical leader frequencies of honest
validators against their ideal commitment-proportional probabilities.
We evaluate fairness at the implementation level by comparing empirical leader
frequencies against the ideal commitment-proportional probabilities used by the
protocol (Section~\ref{sec:leader-election}). The resulting alignment provides an
empirical sanity check for Lemma~\ref{lem:fairness}.

\begin{figure}[t]
	\centering
	\includegraphics[width=0.78\linewidth]{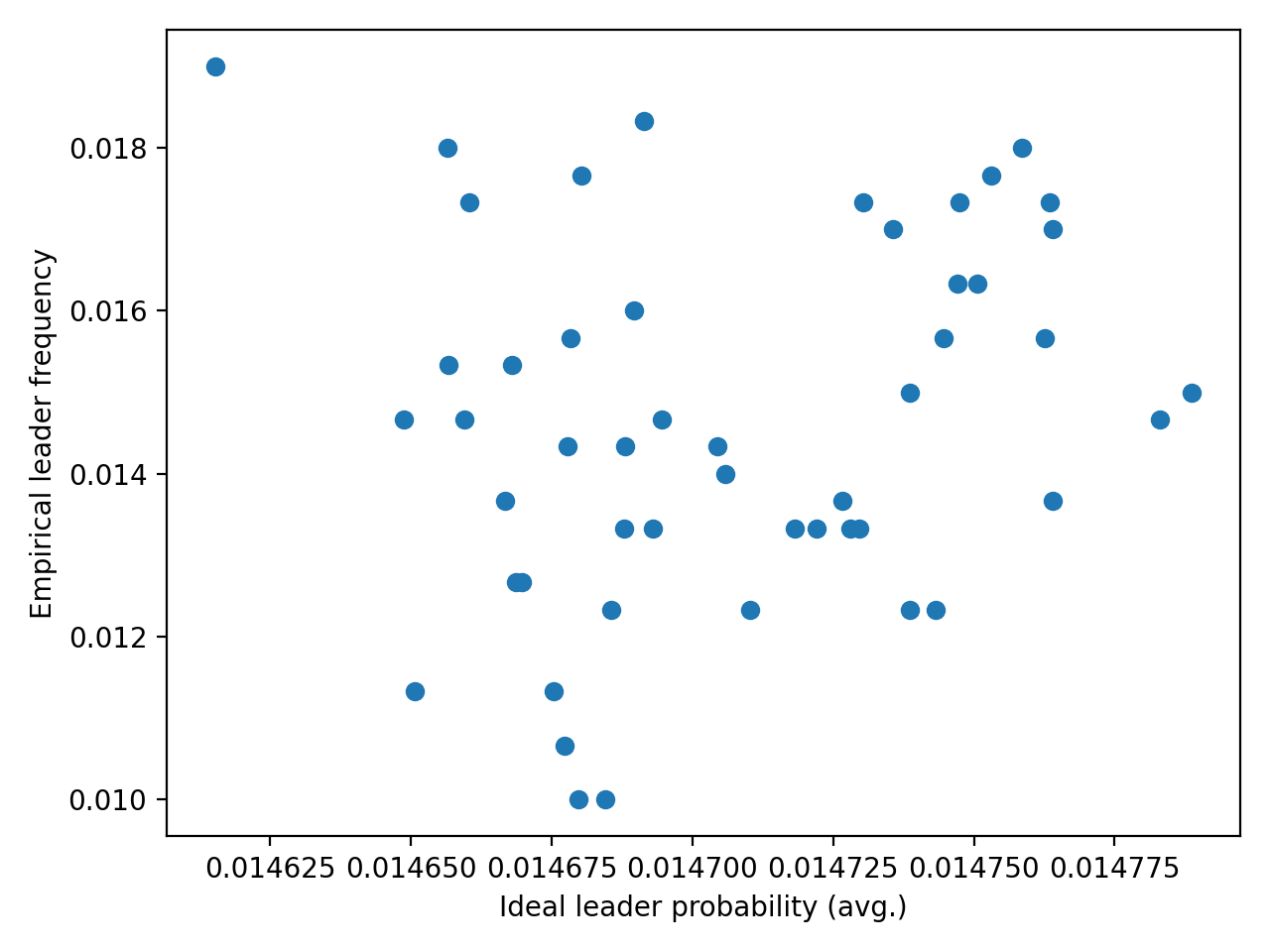}
	\caption{\textbf{Fairness of commitment-weighted leader election among honest validators.}
		For each honest validator, the x-axis reports its ideal leader probability
		averaged over time (proportional to $CS_v(t)$), while the y-axis shows the
		empirical leader frequency over the simulation horizon. Points cluster near the
		diagonal, indicating that leader election does not systematically bias honest
		nodes beyond differences in accumulated commitment, consistent with
		Lemma~\ref{lem:fairness}.}
	\label{fig:fairness}
\end{figure}

\paragraph{Optional ablation: availability decay.}
To stress-test the role of availability in suppressing dormant identities, we
repeat the baseline experiment while varying the decay rate $\lambda$.
Larger $\lambda$ accelerates the loss of $U_v(t)$ for offline identities,
reducing the viability of rotating or intermittently online Sybils. This
experiment complements the long-range and zombie-validator intuition in
Lemma~\ref{lem:long-range} by demonstrating that the availability component acts
as a tunable ``freshness'' filter in practice.

\subsection{Discussion}

Across experiments, the simulation supports the theoretical claims of PoCmt:

\begin{itemize}
	\item Adversarial influence scales with human-time capacity rather than with
	identity count or capital, matching the cost separation of Section~\ref{sec:cost}.
	\item Sustaining many influential Sybil identities requires linear human effort
	per window (Theorem~\ref{thm:linear-sybil}), and sublinear effort leads to drift
	in favor of honest validators (Lemma~\ref{lem:drift}).
	\item Commitment-weighted leader election is empirically fair among honest
	validators (Lemma~\ref{lem:fairness}), consistent with the protocol’s
	commitment-proportional sampling rule (Section~\ref{sec:leader-election}).
	\item Availability decay provides a practical lever for suppressing dormant or
	rotating identities, in line with the long-range robustness intuition
	(Lemma~\ref{lem:long-range}).
\end{itemize}

Together with the analytical results, these experiments provide empirical
evidence that PoCmt realizes a qualitatively different security regime grounded
in non-parallelizable human effort.

\section{Discussion and Limitations}
\label{sec:discussion}

PoCmt introduces a fundamentally human-centric weighting resource for
permissionless consensus, replacing machine-parallelizable resources with
\emph{real-time human effort}. While the preceding sections establish
cost-theoretic guarantees (Section~\ref{sec:cost}), backbone-style security
properties (Section~\ref{sec:security}), and empirical mechanism validation
(Section~\ref{sec:simulation}), several practical considerations arise when
translating PoCmt into deployed systems.

\subsection{Human Effort as a Security Resource}

PoCmt treats human cognition, perception, and temporal availability as scarce
resources. This creates design constraints that are absent in PoW/PoS-like
systems.

\paragraph{Cognitive load and sustainability.}
Validators must periodically solve Human Challenge Oracle (HCO) tasks within
bounded response windows. These tasks must be short enough to avoid fatigue and
attrition, while still maintaining a meaningful human--machine gap. Determining
a stable difficulty--frequency region---and how it interacts with validator
churn---is a core deployment question.

\paragraph{Inclusiveness and accessibility.}
Challenge families must be robust across languages, cultures, and accessibility
conditions. Unlike PoW/PoS, PoCmt’s security depends on human capabilities and
user experience; usability therefore becomes a first-class security parameter.

\subsection{Incentives for Sustained Human Effort}
\label{subsec:incentives}

\noindent\textbf{Why this matters.}
A practical deployment must explain why rational participants continue providing
human effort over long periods, despite fatigue, abandonment, churn, and
replacement. Our analysis focuses on the \emph{security consequences} of using
human-time as the scarce weighting resource and therefore adopts the standard
\emph{honest-but-participating} assumption: honest validators continue solving
HCO challenges and participating over time.

A complete incentive model (utility, fatigue, entry/exit, and labor-market
effects) is orthogonal to the cost-theoretic separation and backbone-style
guarantees, but essential for predicting real-world participation.

\paragraph{Natural incentive hooks.}
PoCmt admits several incentive mechanisms that can sustain effort without
changing the security core:
(i) window-level rewards can be tied directly to recent engagement (e.g., a
challenge-solve reward and/or a multiplicative reward factor depending on recent
$H_v$);
(ii) commitment can act as a reputation-like signal that influences block reward
allocation and committee selection, creating long-run benefits from continued
participation; and
(iii) validator operation can be decomposed into infrastructure operators (who
maintain online nodes) and human solvers (who provide periodic HCO responses),
supporting churn and replacement while preserving the linear-cost barrier.

Formalizing these incentives requires equilibrium analysis under human fatigue,
outsourcing, and churn. We view this as a primary direction for future work, and
note that our security results remain valid under any incentive mechanism that
preserves the HCO non-parallelizability bound per window (Section~\ref{sec:HCO}).

\subsection{Parameter Sensitivity and Tuning}

PoCmt introduces parameters such as engagement boost $\kappa_h$, availability
decay rate $\lambda$, slashing factor $\delta$, and scoring weights
$(\alpha,\beta,\gamma)$. Poor parameterization may reduce either security or
participation:

\begin{itemize}
	\item excessive decay may penalize honest validators with unstable connectivity,
	\item excessive engagement boost may create high-variance growth dynamics,
	\item aggressive slashing may deter participation despite honest intent.
\end{itemize}

Availability sensitivity is explicitly parameterized via $\lambda$ and the
duration of human windows (Section~\ref{sec:time-model}). Larger windows and
smaller decay rates reduce penalties for intermittent connectivity, while smaller
windows and larger decay favor stricter freshness guarantees. This exposes a
deployment-level trade-off between tolerance to unstable connectivity and
responsiveness against dormant or abandoned validators (cf.\ the decay ablation
in Section~\ref{sec:simulation}).

A full characterization of safe and robust parameter regions remains an
important direction for future work.

\subsection{Adversaries Beyond Sybil Flooding}

PoCmt guarantees linear Sybil cost, but several adversarial behaviors remain
relevant.

\paragraph{Human labor markets and outsourcing.}
A well-funded adversary may outsource HCO tasks to hired workers. This does not
violate the security model: outsourcing preserves the linear cost structure of
Theorem~\ref{thm:linear-sybil}. However, it shifts the locus of power from
technical resources to labor markets. In this sense, PoCmt is Sybil-resistant
but not necessarily resistant to economic concentration: an adversary can grow
influence by paying ongoing, observable labor costs.

\paragraph{Advances in automation and the long-run human--machine gap.}
PoCmt relies on the existence of challenge families that preserve a practical
human advantage within bounded time windows (Section~\ref{sec:HCO}). This
assumption is analogous in spirit to standard hardness assumptions, but its
long-run stability may be affected by progress in automation. Sustaining PoCmt
over long time horizons therefore motivates adaptive challenge generation and
diversity: deployments should anticipate periodic updates to challenge families
and difficulty parameters, and should treat the challenge layer as evolvable
infrastructure rather than a fixed primitive.

\paragraph{Coordinated groups of humans.}
PoCmt intentionally does not enforce one-person--one-identity. Coordinated groups
of real humans can accumulate commitment without violating any rule. This does
not undermine the linear Sybil-cost guarantee---each additional unit of influence
still requires additional contemporaneous human effort per window---but it raises
a separate question: what notion of \emph{social} decentralization is achieved?
PoCmt provides decentralization at the level of effort rather than social
structure: it guarantees that influence reflects real-time human participation,
not that humans are uncoordinated.

\subsection{Deployment Constraints}

Practical deployments face logistical and systems-level constraints.

\paragraph{Challenge delivery and privacy.}
Challenges must be delivered securely without enabling precomputation, identity
linking, or leakage of validator metadata. Designs must balance transparency
(public verifiability) against privacy (limiting linkability across windows).

\paragraph{Client environments.}
PoCmt does not require trusted hardware, but real deployments may benefit from
secure client environments (e.g., hardened mobile clients or enclaves) to reduce
spoofing and replay risks and to support reliable challenge interaction.

\paragraph{Connectivity assumptions.}
Because availability contributes to commitment, regions with unstable Internet
access may experience unfair penalties. Adjusting $\lambda$, lengthening human
windows, or introducing grace mechanisms can mitigate this effect, but may trade
off responsiveness against dormant identities.

\subsection{Relation to Proof-of-Personhood}

PoCmt sits between anonymous permissionless systems and global identity-based
approaches. It does not require unique human identities, but:
\begin{itemize}
	\item maintaining many actively engaged identities remains costly but not impossible,
	\item identity resale or delegation cannot be completely prevented,
	\item PoCmt does not provide one-person--one-identity guarantees.
\end{itemize}
Thus PoCmt avoids the governance complexity of global identity systems while
still achieving linear Sybil-cost guarantees. It should be viewed as a consensus
weighting primitive grounded in human-time, not as an identity system.

\subsection{Theoretical Limitations}

Several limitations persist even under idealized assumptions.

\paragraph{Dependence on partial synchrony.}
Human-time engagement and timely HCO responses require weak timing assumptions.
In fully asynchronous settings, real-time commitment becomes ill-defined.

\paragraph{Network partitions.}
Extended partitions may cause honest validators to accumulate availability decay
if they cannot reach the challenge oracle or other validators, temporarily
reducing their commitment and influence.

\paragraph{Throughput constraints.}
Human verification inherently caps the rate of engagement accumulation. PoCmt is
therefore not suitable for extremely high-throughput settings without
hybridization.

\subsection{Future Directions}

PoCmt opens multiple research avenues:

\begin{itemize}
	\item equilibrium modeling of incentives under fatigue, churn, and labor markets,
	\item adaptive and diverse AI-resistant challenge generation,
	\item multi-resource consensus combining human-time with stake, storage, or other signals,
	\item formal models of collusion and coordinated human groups,
	\item prototype deployments to quantify participation, usability, and noise.
\end{itemize}

Overall, PoCmt expands the design space of permissionless consensus by leveraging
an intrinsically scarce and non-parallelizable resource---human attention---yet
achieving its full potential requires addressing the practical, economic, and
behavioral complexities outlined above.

\section{Conclusion}

This work introduced \emph{Proof of Commitment} (PoCmt), a new
permissionless consensus primitive grounded in real-time,
human-verifiable engagement rather than machine-parallelizable
resources. We formalized a decomposable commitment state
$S_v(t) = (H_v(t), P_v(t), U_v(t))$ whose evolution is governed by
windowed engagement boosts, availability decay, and slashing for
protocol violations. This structure ensures that validator influence
accumulates only through sustained human effort and degrades
predictably under inactivity or misbehavior.

Through the Human Challenge Oracle (HCO), we modeled human-time as an
explicit, non-parallelizable resource and established a sharp
cost-theoretic separation from classical paradigms. We proved that any
adversary maintaining $s$ active identities must invest
$\Theta(s)$ units of human-time per window, and that adversarial
commitment cannot outgrow honest commitment without continuously paying
this linear cost. In contrast, we showed that consensus protocols based
on divisible and parallelizable resources—such as Proof of Work and
Proof of Stake—cannot enforce linear Sybil cost under their standard
assumptions.

Building on this resource model, we presented a complete PoCmt
consensus protocol, including commitment-weighted leader election,
block proposal and validation, fork choice, and optional BFT-style
finality. Our security analysis established backbone-style guarantees
of safety, liveness, and proportional fairness under partial synchrony,
as well as inherent robustness against long-range and ``zombie''
attacks via time-dependent commitment decay. A simulation study
complemented the analysis, empirically validating the commitment drift
invariant, the linear dependence of adversarial influence on
human-time capacity, and the fairness of commitment-proportional leader
selection.

PoCmt expands the design space of permissionless consensus by grounding
security in an intrinsically scarce, non-amortizable resource—human
attention—rather than computation or capital. While this shift raises
new practical questions around incentives, usability, and challenge
design, it demonstrates that strong Sybil resistance need not rely on
machine-dominated resources. By reframing consensus around human-time,
PoCmt opens a new pathway toward decentralized protocols whose security
reflects genuine, sustained human participation rather than technical
or financial amplification.

\bibliographystyle{ACM-Reference-Format}
\bibliography{refs}

\end{document}